\newtheorem{theorem}{Theorem}
\newtheorem{lemma}{Lemma}
\newtheorem{proposition}{Proposition}
\newtheorem{definition}{Definition}
\newtheorem{remark}{Remark}
\newenvironment{prevproof}[2]{\noindent {\em {Proof of {#1}~\ref{#2}:}}}{$\Box$\vskip \belowdisplayskip}
\newcommand{\yangnote}[1]{{\color{black}{#1}}}
\newcommand{\rev}{\textsc{Rev}}
\newcommand{\notshow}[1]{{}}
\newcommand{\latentDist}{\widehat{D}_z}
\newcommand{\lipshitz}{\mathcal{L}}
\newcommand{\mlDistI}{\widehat{D}_i}
\newcommand{\latentDistI}{\widehat{D}_{z,i}}
\newcommand{\givenMech}{\widehat{M}}
\newcommand{\query}{\mathcal{Q}}
\newcommand{\condsample}{\mathcal{S}}
\newcommand{\qanswer}{\hat{y}}
\newcommand{\tanswer}{y}
\newcommand{\normal}{\mathcal{N}}
\newcommand{\disteq}{\overset{d}{=}}
\newcommand{\E}{\mathbb{E}}
\newcommand{\norm}[1]{\left\lVert#1\right\rVert_2}
\newcommand{\normI}[1]{\left\lVert#1\right\rVert_\infty}
\newcommand{\lsv}[1]{\sigma_{max}(#1)}
\newcommand{\ssv}[1]{\sigma_{min}(#1)}
\newcommand{\Tr}[1]{\mathrm{Tr}(#1)}
\newcommand{\Influ}[1]{\textsc{Inf}(#1)}
\newcommand{\var}{\mathrm{Var}}
\title{Recommender Systems meet Mechanism Design}
\author{Yang Cai\footnote{Supported by a Sloan Foundation Research Fellowship and the NSF Award CCF-1942583 (CAREER).}\\Computer Science Department\\ Yale University \and Constantinos Daskalakis\footnote{Supported by NSF Awards CCF-1901292, DMS-2022448 and DMS-2134108, by a Simons Investigator Award, by the Simons Collaboration on the Theory of Algorithmic Fairness, by a DSTA grant, and by the DOE PhILMs project (No. DE-AC05-76RL01830). }\\EECS and CSAIL\\ MIT}
\date{}
\begin{document}
\maketitle

\begin{abstract}
        Machine learning has developed a variety of tools for learning and representing high-dimensional distributions with structure. Recent years have also seen big advances in designing multi-item mechanisms. Akin to overfitting, however, these mechanisms can be extremely sensitive to the Bayesian prior that they target, which becomes problematic when that prior is only approximately known. At the same time, even if access to the exact Bayesian prior is given, it is known that optimal or even approximately optimal multi-item mechanisms  run into   sample, computational,  representation and communication intractability barriers.
    
    We consider a natural class of multi-item mechanism design problems with very large numbers of items, but where the bidders' value distributions can be well-approximated by a topic model akin to those used in recommendation systems with very large numbers of possible recommendations. We propose a mechanism design framework for this setting, building on a recent robustification framework by Brustle et al., which disentangles the statistical challenge of estimating a multi-dimensional prior from the task of designing a good mechanism for it, and robustifies the performance of the latter against the estimation error of the former.  We provide an extension of this framework appropriate for our setting, which allows us to exploit the expressive power of topic models to reduce the effective dimensionality of the mechanism design problem and remove the dependence of its computational, communication and representation complexity  on the number of items.
\end{abstract}

\thispagestyle{empty}
\addtocounter{page}{-1}
\newpage


\section{Introduction} \label{sec:intro}

Mechanism Design has found important applications in the design of offline and online markets. One of its main applications is the design of auctions, where a common goal is to maximize the seller's revenue from the sale of one or multiple items to one or multiple bidders. This is challenging because bidders are strategic and interact with the auction in a way that benefits themselves rather than the seller. It is well-understood that, without any information about the bidders' willingness to pay for different bundles of items, there is no meaningful way to optimize revenue. As such, a classical approach in Economics is to assume that bidders' {\em types} -- which determine their values for different bundles  and thus their willingness to pay for different bundles -- are not arbitrary but randomly drawn from a joint distribution $D$ that is common knowledge, i.e.~known to all bidders and the auctioneer. With such a Bayesian prior, the revenue of different mechanisms is compared on the basis of what revenue they achieve in expectation with respect to bidder type vectors drawn from $D$, and assuming that bidders play according to some (Bayesian) Nash equilibrium strategies, or some other type of (boundedly) rational behavior, e.g.~no-regret learning.

Even with a Bayesian prior, however, revenue maximization is quite a challenging task. While Myerson's celebrated work showed that a relatively simple mechanism is optimal in single-item settings~\cite{Myerson81}, characterizing the structure of optimal multi-item mechanisms has been notoriously difficult both analytically and computationally. Indeed, it is known that (even approximately) optimal multi-item mechanisms may require description complexity that scales exponentially in the number of items, even when there is a single buyer~\cite{hart2013menu,dughmiLN14,Daskalakis2017,babaioff2021menu}, they might be computationally intractable, even in simple settings~\cite{CaiDW12b,DaskalakisDT14,chen2015complexity}, and they may exhibit several counter-intuitive properties  which do not arise in single-item settings; see survey~\cite{Daskalakis2015}. Nevertheless, recent years have seen substantial progress on various fronts: analytical characterizations of optimal multi-item mechanisms~\cite{DaskalakisDT13,giannakopoulos2014duality,kash2016optimal,Daskalakis2017}; computational frameworks for computing near-optimal multi-item mechanisms~\cite{AlaeiFHHM12,CaiDW12a,CaiDW12b,CaiDW13a,CaiDW13b}; approximate multi-item revenue optimization via simple mechanisms~\cite{ChawlaHK07,ChawlaHMS10,Alaei11,HartN12,KleinbergW12,CaiH13,BILW14,Yao15,RubinsteinW15,CaiDW16,ChawlaM16,CaiZ17,DaskalakisFLSV20}; and (approximate) multi-item revenue optimization using sample access to the type distribution~\cite{morgenstern2016learning, goldner2016prior, CaiD17,Syrgkanis17,GonczarowskiW18,brustle2020multi}, including via the use of deep learning~\cite{feng2018deep,shen2019automated,dutting2019optimal}.

The afore-described progress on multi-item revenue optimization provides a diversity of tools that can be combined to alleviate the analytical and computational intractability of optimal mechanisms. Yet, there still remains an important challenge in applying those tools, which is that they typically require that the type distribution $D$ is either known or can be sampled. However, this is too strong an assumption.  It is common that $D$ is {\em estimated} through market research or econometric analysis in related settings, involving similar items or a subset of the items. In this case, we would only hope  to know some approximate distribution $\hat D$ that is close to $D$. In
other settings, we may have sample access to the true distribution $D$ but there might be errors in measuring
or recording those samples. Again, we might hope to {\em estimate} an approximate distribution $\hat D$ that is close
to $D$. Unfortunately, it is well understood that designing a mechanism for $\hat D$ and using it for $D$ might be a bad idea, as optimal mechanisms tend to overfit the details of the type distribution. This has motivated a strand of recent literature to study how to robustify mechanisms to errors in the distribution~\cite{BergemannS11,CaiD17,Brustle2019}.

There is, in fact, another important reason why one might want to design mechanisms for some approximate type distribution. Multi-dimensional data is complex and one would want to leverage the extensive statistical and machine learning toolkit that allows approximating such high-dimensional distributions with more structured models. Indeed, while the true type distribution $D$ might not conform to a simple model, it might be close to a distribution $\hat D$ that does. We would like to leverage the simple structure in $\hat D$ to (i) alleviate the computational intractability of multi-item mechanisms, and (ii) reduce the amount of communication that the bidders and the auctioneer need to exchange. While the structured model $\hat D$ might allow (i) and (ii), we need the guarantee that the revenue of our mechanism  be robust when we apply it to the true distribution~$D$.

Motivated by the discussion above, in this work we build a multi-item mechanism design framework that combines matrix factorization models developed for recommendation systems with mechanism design,  targeting two issues: (1) the intractability of Mechanism Design with respect to the number of items (arising from the exponential dependence of the number of types on the number of items if no further assumptions are placed); (2) the lack of exact access to the Bayesian priors. In particular, we assume that each bidder draws their type -- specifying their values for a very large universe of $N$ items (think all restaurants in a city or all items on Amazon) -- from a distribution $D_i$ that is close to a Matrix Factorized model $\hat{D}_i$, whose latent dimension is $k << N$. Targeting these approximate distributions $\hat{D}_i$ allows us to reduce the effective dimensionality of bidder types to $k$, which has huge advantages in terms of the computational/representation/communication/sample complexity of mechanism design. We develop tools that allow us to (a) use the mechanism constructed for the approximate  $\hat{D}_i$'s under the true ${D}_i$'s without sacrificing much revenue; and (b) interact with the bidders who are unaware of the latent codes (they only understand their values for the $N$ items and are oblivious to the matrix factorized model) yet exploit the factorized model for efficiently communicating with them without the impractical burden of having them communicate their $N$-dimensional type to the mechanism. In sum, our results are as follows:

\begin{itemize}
    \item With a query protocol $\query$ that learns an approximate latent representation of a bidder's type, Theorem~\ref{thm:main} shows how to combine it with any mechanism $\givenMech$ that is designed only for the Matrix Factorization model to produce a mechanism that generates comparable revenue but with respect to the true distribution. The result is obtained via a refinement of the robustification result in~\cite{brustle2020multi}, where the loss in revenue, as well as the violation in incentive compatibility  now only depend on the effective dimension of the Matrix Factorization model, $k$, but not the total number of items, $N$ (Lemma~\ref{lem:robustness}).
    
    \item We show that if the valuations are constrained-additive (Definition~\ref{def:constrained additive}), we can obtain communication-efficient query protocols in several natural settings (Theorem~\ref{thm:query protocol}). The queries we consider ask a bidder whether they are willing to purchase an item at a given price. In the first setting, the design matrix of the Matrix Factorization model contains a diagonally dominant matrix -- a generalization of the well-known separability assumption by Donoho and Stodden~\cite{DonohoS03}. In two other settings, we assume that the design matrix is generated from a probablistic model and show that a simple query protocol succeeds  with high probability.
    
    \item Combining Theorems~\ref{thm:main} and~\ref{thm:query protocol}, we show that, given any mechanism $\givenMech$ that is designed only for the Matrix Factorization model, we can design a mechanism that achieves comparable revenue and only requires the bidders to answer a small number of simple queries. In particular, for several natural settings, we show that the number of queries scales \yangnote{quasi-linearly} in the effective dimension of the Matrix Factorization model and independent of the total number of items (Proposition~\ref{prop:1+2 deterministic}).
\end{itemize}
\section{Preliminaries}
\subsection{Brief Introduction to Mechanism Design}
We provide a brief introduction to mechanism design. To avoid a very long introduction, we only define the concepts in the context of multi-item auctions, which will be the focus of this paper. See Chapter 9 of~\cite{AGTbook} and the references therein for a more detailed introduction to mechanism design.

\paragraph{Multi-item Auctions.} The seller is selling \textbf{$N$ heterogenous items} to \textbf{$m$ bidders}. Each bidder $i$ is assumed to have a \textbf{private type} $t_i$ that encodes their preference over the items and bundles of items. We assume that $t_i$ lives in the $N$-dimensional Euclidean space. For each bidder, there is a publicly known valuation function $v_i(\cdot,\cdot)$, where $v_i(t_i,S)\in \mathbb{R}$ is bidder $i$'s value for bundle $S\subseteq [N]$ when $i$'s private type is $t_i$. In this paper, we consider the \emph{Bayesian setting with private types}, that is, each bidder's type $t_i$ is drawn \emph{privately} and \emph{independently} from a publicly known distribution $D_i$.

\vspace{-.1in}
\paragraph{Mechanism.} The seller designs a mechanism to sell the items to bidders. A mechanism consists of an allocation rule and a payment rule, where the allocation rule decides a way to allocate the items to the bidders, and the payment rule decides how much to charge each bidder.

\vspace{-.1in}
\paragraph{Direct Mechanism:} In a {direct mechanism}, the mechanism directly solicits types from the bidders and apply the allocation and payment rules on the reported types. More specifically, for any reported type profile $b=(b_1,\ldots, b_m)$, a direct mechanism $M:=(x(\cdot),p(\cdot))$ selects $x(b)\in\{0,1\}^{m\times N}$ as the allocation and charges each bidder $i$ payment $p_i(b)$.\footnote{Note that $p(b)=(p_1(b),\ldots, p_m(b))$.} We slightly abuse notation to allow the allocation rule to be randomized, so $x(b)\in \Delta\left(\{0,1\}^{m\times N}\right)$. We assume that bidders have quasi-linear utilities. If bidder $i$'s private type is $t_i$, her utility under reported bid profile $b$ is $u_i\left(t_i, M(b)\right)=\E\left[v_i\left(t_i,x(b)\right)-p_i(b)\right]$, where the expectation is over the randomness of the allocation and payment~rule. 

\vspace{-.1in} 
\paragraph{Expected Revenue:} In this paper, our goal is to design mechanisms with high expected revenue. For a direct mechanism $M$, we use $\rev(M,D)$ to denote $\E_{t\sim D}[\sum_{i\in[m]} p_i(t)]$, where $t=(t_1,\ldots, t_m)$ is the type profile and is drawn from $D=\bigtimes_{i\in[m]} D_i$.

\vspace{-.1in}
\paragraph{Incentive Compatibility and Individual Rationality}
Since the bidders' types are private, unless the mechanism \emph{incentivizes} the bidders to report truthfully, there is no reason to expect that the reported types correspond to the true types. The notion of incentive compatibility is defined to capture this.
\begin{itemize}

\item {\bf $\varepsilon$-Bayesian Incentive Compatible ($\varepsilon$-BIC):} if bidders draw their types from some distribution $D=\bigtimes_{i=1}^m D_i$, then a direct mechanism $M$ is \emph{$\varepsilon$-BIC with respect to $D$} if for each bidder $i\in[m]$
$$\mathbb{E}_{t_{-i} \sim D_{-i}}[u_i(t_i,M(t_i,t_{-i}))] \geq \mathbb{E}_{t_{-i} \sim D_{-i}}[u_i(t_i,M(t'_i,t_{-i}))] -\varepsilon,$$ for all potential misreports $t'_i$, in expectation over all other bidders bid $t_{-i}$. A mechanism is BIC if it is $0$-BIC.

\item \yangnote{{\bf $(\varepsilon,\delta)$-Bayesian Incentive Compatible ($(\varepsilon,\delta)$-BIC):} if bidders draw their types from some distribution $D=\bigtimes_{i=1}^m D_i$, then a direct mechanism $M$ is \emph{$(\varepsilon,\delta)$-BIC with respect to $D$} if for each bidder $i\in[m]$:

$$\Pr_{t_i\sim D_i}\left[\mathbb{E}_{t_{-i} \sim D_{-i}}[u_i(t_i,M(t_i,t_{-i}))] \geq \mathbb{E}_{t_{-i} \sim D_{-i}}[u_i(t_i,M(t'_i,t_{-i}))] -\varepsilon\right]\geq 1-\delta.$$ 
}
	\item {\bf Individually Rational (IR):} A direct mechanism $M$ is \emph{IR} if for all type profiles $t=(t_1,\ldots, t_m)$, $$u_i(t_i,M(t_i,t_{-i})) \geq 0$$ for all bidders $i\in [m]$.
	\end{itemize}

\vspace{-.1in}
\paragraph{Indirect Mechanism:} An {\em indirect mechanism} does not directly solicit the bidders' types. After interacting with the bidders, the mechanism selects an allocation and payments. The notions of $\varepsilon$-Bayesian Incentive Compatibility and Individual Rationality can be extended to indirect mechanisms using the solution concept of $\varepsilon$-Bayes Nash equilibrium. \yangnote{The notion of $(\varepsilon,\delta)$-Bayesian Incentive Compatibility can be extended to indirect mechanisms using the new solution concept, which we call {\em $(\varepsilon,\delta)$-weak approximate Bayes Nash equilibrium}. In an incomplete information game, a strategy profile is an $(\varepsilon,\delta)$-weak approximate Bayes Nash equilibrium if for every bidder, with probability no more than $\delta$ (over the randomness of their own type), unilateral deviation from the Bayesian Nash strategy can increase the deviating bidder's expected utility (with respect to the randomness of the other bidders' types and assuming those follow their Bayesian Nash equilibrium strategies) by more than $\varepsilon$.}

\begin{remark}\label{rmk:revenue and IC estimate}
For a $(\varepsilon,\delta)$-weak approximate Bayes Nash equilibrium, its expected revenue computation is made in this paper using the convention that all bidders follow their $(\varepsilon,\delta)$-weak approximate Bayes Nash equilibrium strategies. At a cost of an additive $m^2\delta H$ loss in revenue (where $H$ is the highest possible value of any bidder), we can assume that only the $(1-\delta)$-fraction of types of each bidder who have no more than $\varepsilon$ incentive to deviate from the weak approximate Bayes Nash equilibrium strategies follow these strategies while the remaining $\delta$ fraction use arbitrary strategies. Similarly, we can interpret the $(\varepsilon,\delta)$-weak approximate Bayes Nash equilibrium definition as requiring that at least $(1-\delta)$-fraction of the types of each bidder have at most $O(\varepsilon+m\delta H)$ incentive to deviate from the Bayes Nash strategies assuming that for every other bidder at most $\delta$ fraction of their types deviate from their Bayes Nash strategies.
\end{remark}

\subsection{Further Preliminaries}
\begin{definition}\label{def:prokhorov}
Let $(U,d)$ be a metric space and $\mathcal{B}$ be a $\sigma$-algebra on $U$. For $A \in \mathcal{B}$, let $A^{\varepsilon} = \{x : \exists y \in A \ \  s.t \ \ d(x,y)<\varepsilon \}$. 	Two probability measure 
$P$ and $Q$ on $\mathcal{B}$ have \emph{Prokhorov distance}
$$ \inf \left \{\varepsilon>0 : P(A) \leq Q(A^{\varepsilon}) + \varepsilon \text{ and } \ Q(A)\leq P(A^{\varepsilon})+\varepsilon  ,~ \forall A \in \mathcal{B}\right \}.$$
We consider distributions supported on some Euclidean Space, and we choose {$d$ to be the $\ell_\infty$-distance}. We denote the $\ell_\infty$-Prokhorov distance between distributions $F$, $\widehat{F}$ by $d_P(F,\widehat{F})$.
\end{definition}

We will also make use of the following characterization of the Prokhorov metric by~\cite{Strassen65}.
\begin{lemma}[Characterization of the Prokhorov Metric \cite{Strassen65}]\label{lem:prokhorov characterization}
	Let $F$ and $\widehat{F}$ be two distributions supported on $\mathbb{R}^n$. $d_P(F,\widehat{F})\leq \varepsilon$ if and only if there exists a coupling $\gamma$ of $F$ and $\widehat{F}$, such that $\Pr_{(x,y)\sim \gamma}\left[\normI{x-y}>\varepsilon \right]\leq~\varepsilon$.
	\end{lemma}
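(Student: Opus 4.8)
The statement is Strassen's theorem (see \cite{Strassen65}) specialized to $\mathbb{R}^n$ with the $\ell_\infty$ metric, so one option is simply to cite it; here is how I would reprove it. The \emph{easy direction} --- a good coupling forces small Prokhorov distance --- is essentially immediate. Given a coupling $\gamma$ of $F$ and $\widehat{F}$ with $\Pr_{(x,y)\sim\gamma}[\normI{x-y}>\varepsilon]\le\varepsilon$, fix a Borel set $A$ and $\eta>0$ and split $F(A)$ according to whether $\normI{x-y}\le\varepsilon$; on the good event $y\in A^{\varepsilon+\eta}$ (with witness $x\in A$, which lies within $\varepsilon<\varepsilon+\eta$ of $y$), so $F(A)\le\widehat{F}(A^{\varepsilon+\eta})+\varepsilon\le\widehat{F}(A^{\varepsilon+\eta})+(\varepsilon+\eta)$. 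The symmetric bound holds as well, hence $d_P(F,\widehat{F})\le\varepsilon+\eta$ for every $\eta>0$, i.e.\ $d_P(F,\widehat{F})\le\varepsilon$.

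For the \emph{hard direction} we are handed $F(A)\le\widehat{F}(A^{\varepsilon'})+\varepsilon'$ and the symmetric inequality for all Borel $A$ and all $\varepsilon'>\varepsilon$, and must construct a coupling. I would run the classical finite-approximation argument in three steps. \emph{(i) Discretize:} restrict $F$ and $\widehat{F}$ to a box carrying all but $\delta$ of each of their masses, partition the box into finitely many Borel cells of $\ell_\infty$-diameter $\le\delta$, and collapse each cell's mass onto a representative point (routing the discarded mass anywhere), obtaining finitely supported $F_\delta,\widehat{F}_\delta$ with $d_P(F,F_\delta),d_P(\widehat{F},\widehat{F}_\delta)=O(\delta)$; by the triangle inequality for $d_P$, $d_P(F_\delta,\widehat{F}_\delta)\le\varepsilon+O(\delta)=:\alpha$. \emph{(ii) Solve the finite case} (below). \emph{(iii) Take a weak limit:} the couplings $\gamma_\delta$ from step (ii) have marginals $F_\delta,\widehat{F}_\delta$ converging weakly to the (automatically tight) measures $F,\widehat{F}$, so $\{\gamma_\delta\}$ is tight; pass to a weakly convergent subsequence $\gamma_\delta\Rightarrow\gamma$, whose marginals are $F$ and $\widehat{F}$. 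For each $\varepsilon'>\varepsilon$ the set $\{(x,y):\normI{x-y}\ge\varepsilon'\}$ is closed and has $\gamma_\delta$-mass $\le\alpha<\varepsilon'$ once $\delta$ is small, so the portmanteau theorem gives $\gamma(\normI{x-y}\ge\varepsilon')\le\limsup_\delta\alpha=\varepsilon$; letting $\varepsilon'\downarrow\varepsilon$ yields $\gamma(\normI{x-y}>\varepsilon)\le\varepsilon$.

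Step (ii) is the crux. For finitely supported $P=\sum_i p_i\delta_{a_i}$, $Q=\sum_j q_j\delta_{b_j}$ with $d_P(P,Q)\le\alpha$, I claim there is a coupling with $\gamma(\normI{x-y}>\alpha)\le\alpha$. This is a ``defect Hall'' / max-flow statement: such a coupling exists iff $P(B)\le Q(N_\alpha(B))+\alpha$ for every set $B$ of left atoms, where $N_\alpha(B)=\{b_j:\normI{a_i-b_j}\le\alpha\text{ for some }a_i\in B\}$ --- necessity is a one-line mass count, and sufficiency is max-flow/min-cut on the bipartite transportation network with source capacities $p_i$, sink capacities $q_j$, infinite-capacity arcs $a_i\to b_j$ whenever $\normI{a_i-b_j}\le\alpha$, plus a capacity-$\alpha$ ``teleport'' gadget letting up to an $\alpha$-fraction of mass be matched ignoring distance (the min cut of this network is exactly $\min_B\big(1-P(B)+Q(N_\alpha(B))+\alpha\big)$, which is $\ge 1$ precisely under the defect Hall condition). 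Finally, $d_P(P,Q)\le\alpha$ gives $P(B)\le Q(B^{\alpha+\eta})+(\alpha+\eta)$ for all $\eta>0$, and since $Q$ has finite support $Q(B^{\alpha+\eta})\downarrow Q(N_\alpha(B))$ as $\eta\downarrow 0$, so the defect Hall condition holds; apply this to $F_\delta,\widehat{F}_\delta$ with the radius $\alpha=\varepsilon+O(\delta)$.

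The recurring obstacle, and the reason the argument is ordered this way, is the mismatch between the \emph{strict} neighborhood $A^\varepsilon$ in the definition of $d_P$ and the \emph{closed} condition $\normI{\cdot}\le\varepsilon$ that is natural in flows and couplings: this forces an $\eta$-slack in both directions, and it can only be absorbed in the limit (via the portmanteau upper bound on closed sets), which is why one discretizes with a slightly-too-large radius $\alpha$ and cleans up afterward. A cosmetically slicker alternative avoids discretization by writing coupling-existence as an infinite-dimensional feasibility problem on measures over $\mathbb{R}^n\times\mathbb{R}^n$ with two marginal constraints and separating via Hahn--Banach, but showing the feasible set is actually nonempty (not merely finitely-approximately so) again needs a tightness/weak-compactness step, so the real content is the same.
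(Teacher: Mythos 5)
The paper does not prove this lemma at all --- it is stated as a known result and attributed to \cite{Strassen65} --- so there is no internal proof to compare against, and your opening observation that one could simply cite Strassen is exactly what the authors do. Your self-contained proof is correct and is the standard route: the easy direction by a direct mass split (correctly inserting an $\eta$-slack to reconcile the closed condition $\normI{x-y}\le\varepsilon$ in the coupling with the open-ball neighborhood $A^\varepsilon$ in the metric's definition), and the hard direction by discretization, a finite max-flow/defect-Hall argument with a capacity-$\alpha$ bypass, and a tightness/portmanteau passage to the limit. Two small points worth tidying if you were to write this up formally: (a) in the min-cut formula the $+\alpha$ term should carry an indicator $\mathds{1}[B\neq\emptyset]$ (for $B=\emptyset$ the cut value is $1$, not $1+\alpha$), though this has no effect on the conclusion that min cut $\ge 1$ iff $P(B)\le Q(N_\alpha(B))+\alpha$ for all $B$; and (b) when you take $\eta\downarrow 0$ to pass from $Q(B^{\alpha+\eta})$ to $Q(N_\alpha(B))$ you are using downward continuity of measure along the nested sets $B^{\alpha+\eta}$, which is fine but deserves a word since $\bigcap_{\eta>0}B^{\alpha+\eta}=N_\alpha(B)$ only because $B$ is finite. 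Neither is a gap; the argument is sound.
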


\begin{definition}[Influence Matrix and Weak Dependence]
For any $d$-dimensional random vector ${X}=(X_1,\ldots, X_d)$, we define the influence of variable $j$ on variable $i$ as $$\alpha_{i,j}:=\sup_{\substack{x_{-i-j} \\x_j\neq x'_j}} d_{TV}\left(F_{X_i\mid X_j=x_j, X_{-i-j}=x_{-i-j}},F_{X_i\mid X_j=x'_j, X_{-i-j}= x_{-i-j}}\right), $$ where $F_{X_i\mid X_{-i}=x_{-i}}$ denotes the conditional distribution of $X_i$ given $X_{-i}=x_{-i}$, and $d_{TV}(D,D')$ denotes the total variational distance between distribution $D$ and $D'$.
Also, let $\alpha_{i,i}:=0$ for each $i$, and we use $\Influ{X}$ to denote the $d\times d$ matrix $(\alpha_{i,j})_{i\in[d],j\in[d]}$. In this paper, we consider the coordinates of $X$ to be \textbf{weakly dependent} if $\norm{\Influ{X}}<1$.
\end{definition} 

\section{Our Model and Main Results}

\paragraph{Setting and Goal:} We consider a classical mechanism design problem, wherein a seller is selling $N$ items to $m$ buyers, where buyer $i$'s type $t_i$ is drawn from a distribution $D_i$ over $\mathbb{R}^N$ independently. The goal is to design a mechanism that maximizes the seller's revenue. In this paper, we operate in a setting where $D_i$ is unknown, but we are given access to the following components: (I) For each bidder~$i$, we are given a machine learning model $\mlDistI$ --- of the matrix factorization type as described below, which approximates $D_i$. (II) We are given a good mechanism $\givenMech$ for the approximate type distributions; in its design this mechanism can exploit the low effective dimensionality, $k$, of types in the approximate model. Our goal is (III)  to use (I) and (II) to obtain a good mechanism for the true type distributions. 

\paragraph{(I) The Machine Learning Component:} We assume that each bidder's type distribution~$D_i$ can be well-approximated by a known \emph{
Matrix Factorization 
(MF)} model $\mlDistI$. In particular: 
\begin{itemize}
\item We use $A\in \mathbb{R}^{N\times k}$ to denote the design matrix of the model, where each column can be viewed as the type (over $N$ items) of an ``archetype.'' As described in the following two bullets, types are sampled by each $\mlDistI$ as linear combinations over archetypes. 
\item We use $\latentDistI$ to denote a distribution over $[0,1]^k$. The subscript $z$ is not a parameter of the distribution --- it serves to remind us that this distribution samples in the latent space $[0,1]^k$ and distinguish it from the distribution $\mlDistI$ defined next.
\item If $F$ is a distribution over $\mathbb{R}^k$, we use $A\circ F$ to denote the distribution of the random variable $Az$, where $z\sim F$. With this notation, we  use  $\mlDistI$ to denote $A\circ \latentDistI$.
\item We assume that, for each bidder, the matrix factorization model is not far away from the true type distribution, that is, for some $\varepsilon_1 >0$ we have that  $d_P(D_i,\mlDistI) \le \varepsilon_1$ for all $i\in [m]$.
\end{itemize}

\begin{remark}
In the above description we assumed that all $\mlDistI$'s share the same design matrix $A$. This is done to avoid overloading notation but all our results would hold if each $\mlDistI$ had its own design matrix $A_i$.
\end{remark}

\paragraph{(II) The Mechanism Design Component:} We assume that we are given a direct mechanism $\givenMech$ for types drawn from the Machine Learning model. In particular, we assume that this mechanism makes use of the effective dimension $k$ of the Machine Learning model, accepting ``latent types'' (of dimension $k$) as input from the bidders. Specifically:
\begin{itemize}
    \item Recall that, for each bidder $i$, their valuation function $v_i(\cdot,\cdot): \mathbb{R}^N\times 2^{[N]}\rightarrow \mathbb{R}$ is common knowledge. (Recall that $v_i$ takes as input the bidder's type and a subset of items so how the bidder values different subsets of items depends on their private type.)
    
    \item\label{itm:induced value}  The designer is given $A$ and $\latentDistI$ for each bidder $i$, and treats bidder $i$'s type as drawn from $\latentDistI$, i.e.~in the latent space $[0,1]^k$. With respect to such ``latent types,'' there is an induced valuation function. In particular, for each bidder $i$, we use $v^A_i:\mathbb{R}^k\times 2^{[N]} \rightarrow \mathbb{R}$ to denote the valuation function defined as follows ${v}^A_i(z_i,S):=v_i(Az_i,S)$, where $z_i\in\mathbb{R}^k$.
    \item With the above as setup, we assume that the designer designs a mechanism $\givenMech$ that is BIC and IR w.r.t. $\latentDist=\bigtimes_{i=1}^m \latentDistI$ and valuation functions $\{v^A_i(\cdot,\cdot)\}_{i\in[m]}$.
\end{itemize}


\paragraph{(III) The New Component:} We consider the regime where $N \gg k$, and our goal is to combine the Machine Learning component with the Mechanism Design component to produce a mechanism which generates revenue comparable to $\rev(\givenMech,\latentDist)$ when used for bidders whose types are drawn from $D=\bigtimes_{i=1}^m D_i$. There are two challenges: (i)  $\givenMech$ takes as input the latent representation of a bidder's type under $\latentDist$, however under $D$ a bidder is simply ignorant about any latent representation of their type so they cannot be asked about it;
(ii)  $\givenMech$'s revenue is evaluated with respect to $\latentDist$ and valuation functions $\{v^A_i(\cdot,\cdot)\}_{i\in[m]}$ and our goal is to obtain a mechanism whose revenue is similar  under $D$ and valuation functions $\{v_i(\cdot,\cdot)\}_{i\in[m]}$. We show how to use a communication efficient query protocol together with a robustification procedure to combine the Machine Learning and Mechanism Design components. 

To state our results, we first need to formally define  query protocols and some of their properties.

\begin{definition}[$(\varepsilon,\delta)$-query protocol]\label{def:query protocol}
Let $\query$ be a \textbf{query protocol}, i.e., some communication protocol that exchanges messages with a bidder over possibly several rounds and outputs a vector in $\mathbb{R}^k$. We say that a bidder interacts with the query protocol truthfully, if whenever the protocol asks the bidder to evaluate some function on their type the bidder evaluates the function and returns the result truthfully. We use  $\query(t)\in \mathbb{R}^k$ to denote the output of $\query$ when interacting with a {truthful bidder} whose type is $t\in\mathbb{R}^N$. $\query$ is called a $(\varepsilon,\delta)$-query protocol, if for any $t\in\mathbb{R}^N$ and $z\in \mathbb{R}^k$  satisfying $\normI{t-Az}\leq \varepsilon$, we have that $\normI{z-\query(t)} \le \delta$. 
\end{definition}

We also need the notion of Lipschitz valuations to formally state our result.
\begin{definition}[Lipschitz Valuations]\label{def:Lipschitz}
   $v(\cdot,\cdot):\mathbb{R}^N\times 2^{[N]}\rightarrow \mathbb{R}$ is a $\lipshitz$-Lipschitz valuation, if for any two types $t,t'\in\mathbb{R}^N$ and any bundle $S\subseteq [N]$, $|v(t,S)-v(t',S)|\leq \lipshitz\normI{t-t'}$. 
\end{definition}

\noindent This includes familiar settings, for example if the bidder is $c$-demand, the Lipschitz constant $\lipshitz=c$.\footnote{A bidder is $c$-demand if for any set $S$ of items, the bidder picks their favorite bundle with size no more than $c$ in $S$ evaluating the value of each such bundle additively, with values as determined by the bidder's type $t$. Formally, $v(t,S)=\max_{B\subseteq S, |B|\leq c} \sum_{j\in B} t_j$.} 

We are now ready to state our first main result.

\begin{theorem}\label{thm:main}
Let $D=\bigtimes_{i=1}^m D_i$ be the bidders' type distributions and $v_i:\mathbb{R}^N\times 2^{[N]}\rightarrow \mathbb{R}$ be a $\lipshitz$-Lipschitz valuation for each bidder $i\in[m]$. Also, let $A\in \mathbb{R}^{N\times k}$ be a design matrix and $\latentDistI$ be a distribution over $\mathbb{R}^k$ 
for each $i\in[m]$. 

Suppose we are given query access to a mechanism $\givenMech$ that is BIC and IR w.r.t. $\latentDist=\bigtimes_{i=1}^m \latentDistI$ and valuations $\{{v}^A_i\}_{i\in[m]}$ (as defined in the second bullet of the Mechanism Design component above), and there exists $\varepsilon_1>0$ such that $d_P(D_i,A\circ \latentDistI)\leq \varepsilon_1$  for all $i\in[m]$. Given any $(\varepsilon_1,\varepsilon)$-query protocol with $\varepsilon\geq \varepsilon_1$, we can construct mechanism $M$  using only query access to $\givenMech$ and 
obliviously with respect to $D$, such that for any possible $D$ that satisfies the above conditions of Prokhorov distance closeness the following hold:
	 	\begin{enumerate}
	 	\item $M$ only interacts with every bidder using $\query$ once;
	 		\item $M$ is \yangnote{ $(\kappa,\varepsilon_1)$-BIC} w.r.t. $D$ and IR, where \yangnote{$\kappa=O\left( \lipshitz  \varepsilon_1 +\normI{A}\lipshitz  m \varepsilon+  \normI{A}\lipshitz \sqrt{m\varepsilon}\right)$};
	 		\item  The expected revenue of $M$ is at least $ \rev(\givenMech,\latentDist)-O\left(m\kappa\right).$
	 	\end{enumerate}



\end{theorem}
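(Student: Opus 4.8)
The plan is to build $M$ as the composition of the query protocol $\query$ with a robustified, purely latent‑space mechanism derived from $\givenMech$. Concretely, $M$ interacts with each bidder $i$ by running $\query$ exactly once to extract $\hat z_i:=\query(t_i)\in\mathbb{R}^k$ (for a truthful bidder), and then feeds the profile $(\hat z_1,\dots,\hat z_m)$ into a mechanism $M'$ obtained from $\givenMech$ by: (i) suitably processing (perturbing/resampling) each received latent report using knowledge of the $\latentDistI$'s only --- to compensate for the fact that $\hat z_i$ is distributed as $\query(D_i)$ rather than $\latentDistI$; (ii) outputting $\givenMech$'s allocation; and (iii) charging $\givenMech$'s payment reduced by a Lipschitz‑corrected discount, while offering each bidder a take‑it‑or‑leave‑it choice between the computed (bundle, price) pair and $(\emptyset,0)$ to enforce ex‑post IR. This description references only $A$, the $\latentDistI$'s, $\givenMech$ and $\query$, so $M$ is oblivious to $D$, and $\query$ is invoked once per bidder, which is Part 1.

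The first substantive step is to show that $\query$ transports each $D_i$ to within $\ell_\infty$‑Prokhorov distance $\varepsilon$ of $\latentDistI$ in latent space. Since $d_P(D_i,A\circ\latentDistI)\le\varepsilon_1$, Lemma~\ref{lem:prokhorov characterization} yields a coupling of $D_i$ and $A\circ\latentDistI$ under which $\normI{t_i-Az_i}\le\varepsilon_1$ with probability at least $1-\varepsilon_1$; disintegrating the $A\circ\latentDistI$‑marginal along $z\mapsto Az$ lifts this to a coupling of $D_i$ and $\latentDistI$ with the same closeness guarantee. On that $(1-\varepsilon_1)$‑probability event the $(\varepsilon_1,\varepsilon)$‑query‑protocol property gives $\normI{\query(t_i)-z_i}\le\varepsilon$, so (using $\varepsilon\ge\varepsilon_1$) $d_P(\query(D_i),\latentDistI)\le\varepsilon$. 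Moreover the induced latent valuations $v^A_i(z,S)=v_i(Az,S)$ are $\normI{A}\lipshitz$‑Lipschitz in $z$, since $\normI{Az-Az'}\le\normI{A}\,\normI{z-z'}$. Thus in latent space we are exactly in a robustification setting: a BIC, IR mechanism $\givenMech$ for $\latentDist$ with $\normI{A}\lipshitz$‑Lipschitz valuations, and target distributions $\bigtimes_i\query(D_i)$ that are coordinatewise $\varepsilon$‑close (hence jointly $O(m\varepsilon)$‑close) to $\latentDist$.

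Next I would invoke (or, within the paper, establish as Lemma~\ref{lem:robustness}) the refinement of the robustification of Brustle et al.\ tailored to this latent setting: it produces the $D$‑oblivious $M'$ above that is $(\kappa',\varepsilon_1)$‑BIC (in the weak‑approximate‑BNE sense for the indirect mechanism) and IR with respect to $\bigtimes_i\query(D_i)$, with revenue at least $\rev(\givenMech,\latentDist)-O(m\kappa')$, where $\kappa'=O\!\left(\normI{A}\lipshitz\, m\varepsilon+\normI{A}\lipshitz\sqrt{m\varepsilon}\right)$. The point of the refinement is that every perturbation term is measured with the latent Lipschitz constant $\normI{A}\lipshitz$ and all distributional slack lives in the $k$‑dimensional latent space, so the bound depends on $k$ (through $\normI{A}$) and $m$, not on $N$. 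I expect the $m\varepsilon$ and $\sqrt{m\varepsilon}$ terms to be intrinsic to this step: after processing reports, the other bidders' latent reports are only within $\ell_\infty$‑Prokhorov $\Theta(m\varepsilon)$ of $\latentDist$ while $\givenMech$'s allocation and payment can be arbitrary discontinuous functions of the reports, so one first smooths $\givenMech$ by convolving its inputs with noise of scale $\sigma$ (at cost $O(\normI{A}\lipshitz\,\sigma)$) and then incurs $O(m\varepsilon/\sigma)$‑type error against the smoothed rule; balancing $\sigma\sim\sqrt{m\varepsilon}$ gives the $\sqrt{m\varepsilon}$ term, while the $m\varepsilon$ term and the $\varepsilon_1$ failure probability come from union bounds over bidders and from an $\varepsilon$‑BIC‑to‑BIC‑style correction restoring exact incentive/revenue guarantees.

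Finally I would pull $M'$ back through $\query$: a truthful bidder with type $t_i$ behaves in $M$ exactly as latent type $\hat z_i=\query(t_i)\sim\query(D_i)$ behaves truthfully in $M'$, except that the realized value is $v_i(t_i,\cdot)$ rather than $v^A_i(\hat z_i,\cdot)=v_i(A\hat z_i,\cdot)$; on the $(1-\varepsilon_1)$‑event these differ by at most $\lipshitz\,\normI{t_i-A\hat z_i}\le\lipshitz(\varepsilon_1+\normI{A}\varepsilon)$ via the triangle inequality through the witness $z_i^\ast$. Hence $M$ inherits $M'$'s incentive guarantee up to an extra additive $O(\lipshitz\varepsilon_1+\normI{A}\lipshitz\varepsilon)$, giving $\kappa=\kappa'+O(\lipshitz\varepsilon_1+\normI{A}\lipshitz\varepsilon)=O\!\left(\lipshitz\varepsilon_1+\normI{A}\lipshitz m\varepsilon+\normI{A}\lipshitz\sqrt{m\varepsilon}\right)$, and the discount together with the reject option upgrades $M'$‑IR plus the small value perturbation into honest ex‑post IR for $M$ (Part 2). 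For revenue (Part 3), $M$'s payments equal $M'$'s minus the per‑bidder discount $O(\lipshitz\varepsilon_1+\normI{A}\lipshitz\varepsilon)$ and minus the contribution of rejections, which occur with probability $O(\varepsilon_1)$ per bidder and cost at most $H=O(\normI{A}\lipshitz)$ each (values are bounded since $\latentDistI$ is supported on $[0,1]^k$); summing these $m$ terms and combining with $\rev(M',\bigtimes_i\query(D_i))\ge\rev(\givenMech,\latentDist)-O(m\kappa')$ gives revenue at least $\rev(\givenMech,\latentDist)-O(m\kappa)$. The hard part is the latent‑space robustification of the third paragraph --- controlling $\givenMech$ when the $m$ reported latent types are only approximately $\latentDist$‑distributed while keeping the construction oblivious to $D$, which is precisely what forces the $m\varepsilon$ and $\sqrt{m\varepsilon}$ dependence; a secondary, mostly bookkeeping, issue is that a strategic bidder can steer $\query$'s output to an arbitrary point of $\mathbb{R}^k$ (possibly just outside $[0,1]^k$), so one must verify that $M'$'s approximate‑BIC guarantee quantifies over all such deviations and is unaffected by clipping reports back to the support.
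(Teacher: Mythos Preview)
Your proposal follows essentially the same architecture as the paper's proof: query each bidder once via $\query$ to obtain $\hat z_i$, observe (via the Strassen coupling and the $(\varepsilon_1,\varepsilon)$-query property) that $d_P(\query(D_i),\latentDistI)\le\varepsilon$, apply the latent-space robustification (Lemma~\ref{lem:robustness}) with Lipschitz constant $\normI{A}\lipshitz$ and noise scale $\delta=\sqrt{m\varepsilon}$ to get the $O(\normI{A}\lipshitz m\varepsilon+\normI{A}\lipshitz\sqrt{m\varepsilon})$ term, and then bound the additional error from $v_i(t_i,\cdot)$ versus $v^A_i(\hat z_i,\cdot)$ by $\lipshitz(\varepsilon_1+\normI{A}\varepsilon)$ on the good event, leaving probability $\varepsilon_1$ for the bad event --- which is exactly how the paper obtains the $(\kappa,\varepsilon_1)$-BIC conclusion. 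Your extra discussion (the smoothing/balancing rationale for $\sqrt{m\varepsilon}$, the discount plus reject option for IR, clipping of out-of-support reports) is a correct unpacking of what lives inside Lemma~\ref{lem:robustness}; the paper simply black-boxes that lemma and keeps the Theorem~\ref{thm:main} proof to the three-step algorithm plus the two-case BIC argument. One small slip: Lemma~\ref{lem:robustness} gives plain $\kappa'$-BIC (not $(\kappa',\varepsilon_1)$-BIC) for $M'$ with respect to $\bigtimes_i\query(D_i)$; the $\varepsilon_1$ failure probability enters only at the last step when you pull back to $D$.
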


\begin{remark}
The mechanism $M$ will be an indirect mechanism. We are slightly imprecise here to call the mechanism $(\kappa,\varepsilon_1)$-BIC. Formally what we mean is that interacting with $\query$ \emph{truthfully} is a $(\kappa,\varepsilon_1)$-weak approximate Bayes Nash equilibrium. We compute the expected revenue assuming all bidders interacting with $\query$ \emph{truthfully}. As we discussed in~\Cref{rmk:revenue and IC estimate}, with an additional additive $\normI{A} \lipshitz m^2 \varepsilon_1$ loss in revenue, we can assume that only the $(1-\delta)$-fraction of types of each bidder who have no more than $\varepsilon$ incentive to deviate from the Bayes Nash strategies interact with $\query$ truthfully while the remaining $\delta$ fraction uses arbitrary strategies.

\end{remark}

\paragraph{\textbf{Why isn't ~\cite{brustle2020multi} sufficient?}} One may be tempted to prove  Theorem~\ref{thm:main} using~\cite{brustle2020multi}. However, there are two subtle issues with this approach: (i) The violation of the incentive compatibility constraints and the revenue loss of the robustification process in~\cite{brustle2020multi} depend linearly in $N$, rather than in $\normI{A}$ as in Theorem~\ref{thm:main}. Note that $\normI{A}=\max_{i\in[N]} \sum_{j=1}^k |A_{ij}|$, which only depends on $k$ and the largest value an archetype can have for a single item and thus could be significantly smaller than $N$. (ii) The robustification process involves sampling from the conditional distribution of $A\circ \latentDistI$ on an $N$-dimensional cube, which is equivalent to sampling from the conditional distribution of $\latentDistI$ on a set $S$ whose image after the linear transformation $A$ is the $N$-dimensional cube. However, $S$ may be difficult to sample from if $A$ is not a well-conditioned.

In the following lemma, we refine the robustification result in~\cite{brustle2020multi} (Theorem 3 in that paper) and show that  given an approximate distribution $\widehat{F}$ in the latent space and a BIC and IR mechanism $\widehat{M}$ w.r.t. $\widehat{F}$, we can \emph{robustify} $\widehat{M}$  with \emph{negligible revenue loss} so that it is an approximately BIC and exactly IR mechanism w.r.t. $F$ for any distribution $F$ that is within the $\varepsilon$-Prokhorov ball around $\widehat{F}$. Importantly, we exploit the effective dimension of the matrix factorization model to replace the dependence on $N$ with $\normI{A}$ in both the violation of the incentive compatibility constraints and the revenue loss. Additionally, we only need to be able to sample from the conditional distribution of $\latentDistI$ on a $k$-dimensional cube. We postpone the proof of Lemma~\ref{lem:robustness} to the Appendix~\ref{sec:robustness proof}.

\begin{lemma}\label{lem:robustness}
	 	Let $A\in \mathbb{R}^{N\times k}$ be the design matrix. Suppose we are given a collection of distributions over latent types $\{\widehat{F}_{z,i}\}_{i\in[m]}$, where the support of each $\widehat{F}_{z,i}$ lies in $[0,1]^k$, and a BIC and IR mechanism $\widehat{M}$ w.r.t.~$\widehat{F}=\bigtimes_{i=1}^m 
	 	\widehat{F}_{z,i}$ and valuations $\{v^A_i\}_{i\in [m]}$, where each $v_i$ is an $\lipshitz$-Lipschitz valuation. Let $F=\bigtimes_{i=1}^m 
	 	F_{z,i}$ be any distribution such that $d_P(F_{z,i},\widehat{F}_{z,i})\leq \varepsilon$ for all $i\in[m]$. Given access to a \textbf{sampling algorithm $\condsample_i$} for each $i\in[m]$, where $\condsample_i(x,\delta)$ draws a sample from the conditional distribution of $\widehat{F}_{z,i}$ on the $k$-dimensional cube $\bigtimes_{j\in [k]}[x_j,x_j+\delta)$, we can construct a randomized mechanism $\widetilde{M}$ using only query access to $\givenMech$ and 
	 	obliviously with respect to $F$, such that for any $F$ satisfying the above conditions of Prokhorov distance closeness the following hold: 
	 	\begin{enumerate}
	 		\item $M$ is  
	 		$\kappa$-BIC and IR w.r.t.~ $F$ and valuations $\{v^A_i\}_{i\in [m]}$, where  $\kappa=O\left( \normI{A}\lipshitz  m \varepsilon+\normI{A}\lipshitz \left(\delta+ \frac{m\varepsilon}{\delta} \right)\right)$;
	 		\item  The expected revenue of $\widetilde{M}$ is $\rev\left(\widetilde{M},F\right)\geq 	 		\rev(\widehat{M},\widehat{F})-O\left(m\kappa\right).$
	 	\end{enumerate}
\end{lemma}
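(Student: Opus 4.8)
The plan is to prove Lemma~\ref{lem:robustness} by carrying out the robustification of~\cite{brustle2020multi} entirely inside the $k$-dimensional latent space. The one observation that drives the improvement from $N$ to $\normI{A}$ is that the induced valuations inherit Lipschitzness with a controlled constant: for any $z,z'\in\mathbb{R}^k$ and any bundle $S$,
$$|v^A_i(z,S)-v^A_i(z',S)|=|v_i(Az,S)-v_i(Az',S)|\le \lipshitz\,\normI{Az-Az'}\le \lipshitz\normI{A}\,\normI{z-z'},$$
so each $v^A_i$ is $(\lipshitz\normI{A})$-Lipschitz w.r.t.\ $\ell_\infty$, while latent types live in the bounded cube $[0,1]^k$ (so the value range is $O(\normI{A}\lipshitz)$). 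We therefore have an instance of the robustification problem solved in~\cite{brustle2020multi} --- a BIC and IR mechanism $\widehat{M}$ for $\widehat F=\bigtimes_i\widehat F_{z,i}$ together with targets $F_{z,i}$ that are $\varepsilon$-Prokhorov close to $\widehat F_{z,i}$ --- but with ambient dimension $k$ and relevant Lipschitz constant $\lipshitz\normI{A}$ in place of whatever $N$-scale constant $v_i$ may have on $\mathbb{R}^N$; propagating their estimates with these parameters is what yields $\kappa=O(\normI{A}\lipshitz m\varepsilon+\normI{A}\lipshitz(\delta+m\varepsilon/\delta))$.

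Concretely, I would define $\widetilde{M}$ as follows. When bidder $i$ reports a latent type $b_i\in[0,1]^k$, the mechanism does not run $\widehat{M}$ on $b_i$; it first calls $\condsample_i$ to draw a ``rounded'' latent type $\hat x_i$ from $\widehat F_{z,i}$ conditioned on the $\delta$-cube $\bigtimes_{j\in[k]}[b_{ij}-\varepsilon,\;b_{ij}-\varepsilon+\delta)$ --- the small shift makes this cube contain the entire $\varepsilon$-neighborhood of $b_i$, and we may assume $\delta\ge 2\varepsilon$ since the bound is uninteresting otherwise --- does this independently across bidders, runs $\widehat{M}$ on the profile $\hat x=(\hat x_1,\dots,\hat x_m)$, and charges bidder $i$ the payment of $\widehat{M}$ discounted by $\Theta(\normI{A}\lipshitz\delta)$ and floored so that a truthful bidder is never left with negative utility. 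This construction is well defined and oblivious to $F$. The engine of the analysis is Strassen's characterization (Lemma~\ref{lem:prokhorov characterization}): for each $i$ there is a coupling of $F_{z,i}$ and $\widehat F_{z,i}$ whose two marginals land within $\ell_\infty$-distance $\varepsilon$ with probability at least $1-\varepsilon$; on that event the reported type's $\delta$-cube provably contains the coupled $\widehat F_{z,i}$-sample (so $\condsample_i$ is well defined, the $O(\varepsilon)$ bad event being absorbed into the $\normI{A}\lipshitz m\varepsilon$ term of $\kappa$), while unconditionally $\normI{\hat x_i-b_i}=O(\delta)$ by construction.

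The analysis then splits into three parts. \emph{IR:} since $\normI{\hat x_i-b_i}=O(\delta)$ and $v^A_i$ is $(\lipshitz\normI{A})$-Lipschitz, IR of $\widehat{M}$ at $\hat x_i$ leaves a truthful bidder with utility at least $-O(\normI{A}\lipshitz\delta)$ before the discount, hence exactly nonnegative after it, at a cost of at most $O(\normI{A}\lipshitz\delta)$ in revenue per bidder. \emph{Revenue:} under truthful play the rounded profile is drawn from $\widehat G=\bigtimes_i\widehat G_{z,i}$, where $\widehat G_{z,i}$ is the law of $\hat x_i$; the key estimate (the obstacle below) is that $\widehat{M}$'s expected payments under $\widehat G$ differ from those under $\widehat F$ by $O(m\kappa)$, which with the discount gives $\rev(\widetilde{M},F)\ge\rev(\widehat{M},\widehat F)-O(m\kappa)$. \emph{BIC:} for bidder $i$ with true latent type $x_i$, truthful reporting makes $\widehat{M}$ see a type within $O(\delta)$ of $x_i$ while the others' rounded types are drawn from $\widehat G_{-i}$; combining (a) BIC of $\widehat{M}$ w.r.t.\ $\widehat F$, extended to latent types outside $\mathrm{supp}(\widehat F_{z,i})$ by Lipschitz continuity of $v^A_i$ (needed because $F_{z,i}$ may put mass there), (b) the $O(\normI{A}\lipshitz\delta)$ Lipschitz slack for moving between $x_i$ and $\hat x_i$ inside the value, and (c) the closeness of $\widehat G_{-i}$ to $\widehat F_{-i}$ for swapping the other bidders' type distribution, one concludes that no misreport improves bidder $i$'s interim utility by more than $\kappa$.

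The main obstacle --- and the origin of the $\delta$ and $m\varepsilon/\delta$ terms --- is the quantitative claim used in the revenue and BIC parts: that re-drawing the reported type from $\widehat F_{z,i}$ restricted to an anchored $\delta$-cube perturbs $\widehat{M}$'s interim payments and interim truthful utilities by only $O(\normI{A}\lipshitz)\cdot O(\delta+m\varepsilon/\delta)$. The $\delta$ contribution is immediate: by IC of $\widehat{M}$ the interim truthful utility $t\mapsto \E_{\hat x_{-i}}[u^{\widehat{M}}_i(t,t,\hat x_{-i})]$ is a supremum of functions that are each $(\lipshitz\normI{A})$-Lipschitz in $t$, hence itself $(\lipshitz\normI{A})$-Lipschitz, and $\hat x_i$ is within $O(\delta)$ of $b_i$. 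The delicate $m\varepsilon/\delta$ contribution requires showing that the $\widehat F_{z,i}$-mass sitting within $O(\varepsilon)$ of the boundary of a $\delta$-cube is, in the averaged sense that survives integration over the reports, an $O(\varepsilon/\delta)$ fraction --- this is exactly where the Strassen coupling between $F_{z,i}$ and $\widehat F_{z,i}$ enters, and the factor $m$ appears when the per-bidder perturbations are aggregated across the $m$ independent bidders, both in the joint law $\widehat G$ and in the interim expectation over the others. The two remaining technical nuisances to handle with care are the extension of $\widehat{M}$'s incentive and individual-rationality guarantees from $\mathrm{supp}(\widehat F_{z,i})$ to all of $[0,1]^k$, and the bookkeeping of the finite value range $O(\normI{A}\lipshitz)$ that multiplies these total-variation-type errors.
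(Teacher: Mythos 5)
Your two guiding observations --- that the induced valuations $v^A_i$ are $(\lipshitz\normI{A})$-Lipschitz on the latent cube, and that the whole robustification should therefore be run in the $k$-dimensional latent space with $\normI{A}\lipshitz$ replacing any $N$-scale constant --- are exactly the insight the paper uses. But the construction you build on top of them deviates from the one the paper (following Brustle et al.) uses in a way that breaks the quantitative claim.

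The paper draws a single \emph{uniformly random shifted grid} of cell size $\delta$, rounds every bid to its grid cell, and then resamples via $\condsample_i$ from $\widehat{F}_{z,i}$ restricted to that cell. Because the grid cells partition $[0,1]^k$, resampling a type that was itself distributed as $\widehat{F}_{z,i}$ returns a sample that is \emph{exactly} distributed as $\widehat{F}_{z,i}$; the only discrepancy comes from the bid being drawn from $F_{z,i}$ instead, and the randomness of the grid is precisely what makes the expected total-variation distance between the rounded $F_{z,i}$ and the rounded $\widehat{F}_{z,i}$ be $O(\varepsilon + \varepsilon/\delta)$ (Brustle et al.\ Theorem~2). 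You instead anchor a $\delta$-cube at the report $b_i$ itself (shifted by $\varepsilon$), so the cubes overlap rather than partition, and the resampled type $\hat x_i$ has marginal $\E_{z_i\sim F_{z,i}}\bigl[\widehat{F}_{z,i}\mid C(z_i)\bigr]$, a \emph{biased} re-mixing of $\widehat{F}_{z,i}$. This marginal is not close to $\widehat{F}_{z,i}$ at the $O(\varepsilon/\delta)$ scale you claim. Already for $\varepsilon=0$ (so $F_{z,i}=\widehat{F}_{z,i}$, and your $m\varepsilon/\delta$ term vanishes) take $k=1$ and $\widehat{F}_{z,i}$ equal to $\tfrac12\delta_0 + \tfrac12\,\mathrm{Unif}(0,1]$ with $\delta=0.1$: conditioning on the cube $[b_i,b_i+\delta)$ produces $\Pr[\hat x_i=0]=\tfrac12\cdot\tfrac{10}{11}=\tfrac{5}{11}$, whereas $\widehat{F}_{z,i}(\{0\})=\tfrac12$, a per-bidder total-variation gap of order $\delta$ that is not accounted for by any term in $\kappa$. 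This $\Theta(\delta)$ per-bidder gap, when aggregated across the other $m-1$ bidders in the interim-utility and revenue comparisons, produces an extra $\normI{A}\lipshitz\, m\delta$ in the BIC slack and $\normI{A}\lipshitz\, m^2\delta$ in the revenue loss --- a whole extra factor of $m$ beyond the lemma's $O(m\kappa)$.

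So the step ``the closeness of $\widehat G_{-i}$ to $\widehat F_{-i}$'' is asserted but not established, and in fact fails for the construction as written. To repair it you need the grid to be drawn once, uniformly at random from $[0,\delta)^k$, shared across all bidders and independent of the reports, and to have the cubes passed to $\condsample_i$ be the grid cells (i.e., first snap $b_i$ to the canonical corner of its cell, then call $\condsample_i$ at that corner). The rest of your outline --- Strassen's coupling to control the $O(\varepsilon)$ bad event, the $\normI{A}\lipshitz\delta$ discount to restore exact IR, Lipschitz propagation through $v^A_i$, and the accounting that trades $N$ for $\normI{A}$ --- is in line with the paper's argument and carries over once the partition structure is in place.
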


Equipped with Lemma~\ref{lem:robustness}, we proceed to prove Theorem~\ref{thm:main}.

\begin{prevproof}{Theorem}{thm:main}
Consider the following mechanism:
\begin{algorithm}[h]
\begin{algorithmic}[1]
\STATE Construct mechanism $\widetilde{M}$ using Lemma~\ref{lem:robustness} by choosing $\widehat{F}_{z,i}$ to be $\latentDistI$ for each $i\in [m]$
and $\delta$ to be $\sqrt{m\varepsilon}$.
\STATE Query each agent $i$ using $\query$. Let $\query(b_i)$ be the output after interacting with bidder $i$.~(For any possible output produced by $\query$, there exists a type $b\in \mathbb{R}^N$, so this is w.l.o.g..) 
\STATE Execute mechanism $\widetilde{M}$ on bid profile $\left(\query(b_1),\ldots, \query(b_m)\right)$.
\end{algorithmic}
\caption{Query-based Indirect Mechanism $M$}
\label{mechanism}
\end{algorithm}

Let $t_i$ be bidder $i$'s type and $z_i$ be a random variable distributed according to $\latentDistI$. Since $d_P(D_i,\mlDistI)\leq \varepsilon_1$, Lemma~\ref{lem:prokhorov characterization} guarantees a coupling between $t_i$ and $Az_i$ such that their $\ell_\infty$ distance is more than $\varepsilon_1$ with probability no more than $\varepsilon_1$. As $\query$ is a $(\varepsilon_1,\varepsilon)$-query protocol, when $t_i$ and $Az_i$ are not $\varepsilon_1$ away, we have $\normI{\query(t_i)-z_i}\leq \varepsilon$. Hence, there exists a coupling between $\query(t_i)$ and $z_i$ so that their $\ell_\infty$ distance is more than $\varepsilon$ with probability no more than $\varepsilon$ (recall $\varepsilon_1\leq \varepsilon$). If we choose $F_{z,i}$ to be the distribution of $\query(t_i)$, $\widehat{F}_{z,i}$ to be $\latentDistI$, and $\delta$ to be $\sqrt{m\varepsilon}$,~\Cref{lem:robustness} states that $\widetilde{M}$ is a $O\left( \normI{A}\lipshitz  m \varepsilon+  \normI{A}\lipshitz \sqrt{m\varepsilon}\right)$-BIC mechanism if bidder $i$ has valuation $v_i^A(\cdot)$ and type $\query(t_i)$. \yangnote{Consider two cases: (a) When $\normI{t_i-Az_i}\leq \varepsilon_1$, then $\normI{t_i-A\query(t_i)}\leq \varepsilon_1+\normI{A} \varepsilon$. Since $v_i(\cdot)$ is $\lipshitz$-Lipschitz, deviating from interacting with $\query$ truthfully can increase the expected utility by at most $O\left(\lipshitz\varepsilon_1+\normI{A}\lipshitz  m \varepsilon+  \normI{A}\lipshitz \sqrt{m\varepsilon}\right)$. (b) When $\normI{t_i-Az_i}>\varepsilon_1$, the bidder may substantially improve their expected utility by deviating. Luckily, such case happens with probability no more than $\varepsilon_1$. }
\end{prevproof}

In Theorem~\ref{thm:query protocol}, we show how to obtain $(\varepsilon,\delta)$-queries under various settings. We further assume that the bidders' valuations are all constrained-additive.

\begin{definition}[Constrained-Additive valuation]\label{def:constrained additive}
A valuation function $v:\mathbb{R}^N\times2^{[N]}\rightarrow \mathbb{R}$ is constrained additive if $v(t,S)=\max_{T\in \mathcal{I}\cap 2^S} \sum_{j\in T} (\mu_j+t_j)$, where $\mathcal{I}$ is a downward-closed set system, and $\mu=(\mu_1,\ldots,\mu_N)$ is a fixed vector.\footnote{One can interpret $\mu$ as the common based values for the items that are shared among all types.} For example, unit-demand valuation is when $\mathcal{I}$ includes all subsets with size no more than $1$. If all elements of $\mathcal{I}$ have size no more than $\lipshitz$, then $v$ is a $\lipshitz$-Lipschitz valuation.
\end{definition}

\begin{theorem}\label{thm:query protocol} 
Let all bidders' valuations be constrained-additive. We consider queries of the form: $e_j^T t\overset{?}{\geq} p$, where $e_j$ is the $j$-th standard unit vector in $\mathbb{R}^N$. The query simply asks whether the bidder is willing to pay at least $p+\mu_j$ for winning item $j$. The bidder provides a \emph{Yes/No} answer. We obtain communicationally efficient protocols in the following settings: 
\begin{itemize}
    \item \textbf{Deterministic Structure:} If $A^T$ can be expressed as $[C^T H^T]\Pi_N$, where $\Pi_N\in \mathbb{R}^N$ is a permutation matrix, $H$ is an arbitrary $(N-k)\times k$ matrix, and $C\in \mathbb{R}^{k\times k}$ is diagonally dominant both by rows and by columns.  This is a relaxation of the well-known \textbf{separability assumption} by Donoho and Stodden~\cite{DonohoS03}, that is, $A^T$ can be expressed as $[I_k H^T]\Pi_N$, where $I_k$ is the $k$-dimensional identity matrix. Let $\alpha=\min_{i\in[k]} \left(|C_{ii}|-\sum_{j\neq i} |C_{ij}|\right)$ and $\beta = \min_{j\in[k]} \left(|C_{jj}|-\sum_{i\neq j} |C_{ij}|\right)$. We have a $\left(\varepsilon,\frac{4 \cdot \max_{j\in[k]} C_{jj} }{\alpha\beta}\cdot \varepsilon\right)$-query protocol using $O\left(k\cdot\log\left(\frac{\normI{A}}{\varepsilon} \right)\right)$ queries for any $\varepsilon>0$.
    \item \textbf{Ex-ante Analysis:} If $A$ is generated from a distribution, where each archetype is an independent copy of a $N$-dimensional random vector $\theta$.
    \begin{itemize}
        \item \textbf{Multivariate Gaussian Distributions:} $\theta$ is distributed according to a multivariate Gaussian distribution $\normal(0,\Sigma)$. If there exists a subset $S\subseteq [N]$ such that $\frac{\Tr{\Sigma_S}}{\rho(\Sigma_S)}> 64k$, where $\Sigma_S=\E[\theta_S\theta_S^T]$ is the covariance matrix for items in $S$ and $\rho(\Sigma_S)$ is the largest eigenvalue of $\Sigma_S$,\footnote{$\theta_S$ is the $|S|$-dimensional vector that contains all $\theta_i$ with $i\in S$.} then with probability at least $1-2\exp\left(-\frac{\Tr{\Sigma_S}}{16\cdot \rho(\Sigma_S)}\right)$, we have a $\left(\varepsilon,\frac{64\sqrt{|S| k}}{\sqrt{\Tr{\Sigma_S}}} \cdot \varepsilon\right)$-query protocol using $O\left(|S|\cdot\log\left(\frac{\normI{A}}{\varepsilon} \right)\right)$ queries for any $\varepsilon>0$. Note that when the entries of $\theta$ are i.i.d., any $S$ with size at least $64k$ satisfies the condition. 
        \item \textbf{Bounded Distributions with Weak Dependence:} Let $\theta_i$ be supported on $[-c,c]$ and has mean $0$ for each $i\in[N]$. If there exists a subset $S\subseteq [N]$ such that $\norm{\Influ{\theta_S}}<1$, and $\sum_{i\in S}v_i^2>\frac{16c^2 k\sqrt{|S|}}{1-\norm{\Influ{\theta_S}}}$, where $v_i^2:=\var[\theta_i]$, then with probability at least \\ $1-2\exp\left(-\frac{\left(1-\norm{\Influ{\theta_S}}\right)\cdot (\sum_{i\in S}v_i^2)^2}{64c^4 k|S|}\right)$, we have a $\left(\varepsilon,\frac{64\sqrt{|S| k}}{\sqrt{\sum_{i\in S}v_i^2}} \cdot \varepsilon\right)$-query protocol using \\ $O\left(|S|\cdot\log\left(\frac{\normI{A}}{\varepsilon} \right)\right)$  queries for any $\varepsilon>0$. Note that when the entries of $\theta$ are independent, $\norm{\Influ{\theta_S}}=0$ for any set $S$. If each $\theta_i$ has variance $\Omega(c^2)$, then any set with size at least $\alpha k^2$ suffices for some absolute constant $\alpha$.
    \end{itemize}
\end{itemize}
\begin{remark}
In the ex-ante analysis, the success probabilities depend on the parameters of the distributions, but note that they are both at least $1-2\exp(-4k)$.
\end{remark}
\end{theorem}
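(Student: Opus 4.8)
The plan is to treat all three settings through one template: use the threshold queries to learn (up to small error) a well-chosen subset of the coordinates of the bidder's type $t$, and then invert the linear model on those coordinates. Since the $v_i$ are constrained-additive, for any item $j$ that is feasible on its own we have $v_i(t,\{j\})=\max\{0,\mu_j+t_j\}$, so for the prices relevant to a bisection the query ``is the bidder willing to pay at least $p+\mu_j$ for item $j$?'' truthfully reveals the bit $\mathbbm{1}[t_j\ge p]$. As the latent types of interest lie in $[0,1]^k$, $\normI{t-Az}\le\varepsilon$ forces $|t_j|\le\normI{A}+\varepsilon$, so $O(\log(\normI{A}/\varepsilon))$ bisection steps on $p$ produce $\tilde t_j$ with $|\tilde t_j-t_j|\le\varepsilon$. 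Running this on a set $S\subseteq[N]$ of coordinates costs $O(|S|\log(\normI{A}/\varepsilon))$ queries; the protocol then outputs $\query(t):=A_S^{+}\tilde t_S$, where $A_S\in\mathbb{R}^{|S|\times k}$ is the submatrix of $A$ with rows in $S$ and $A_S^{+}$ its pseudoinverse. Writing $t_S=A_S z+\eta$ with $\normI{\eta}\le\varepsilon$ and $\tilde t_S=t_S+\eta'$ with $\normI{\eta'}\le\varepsilon$, as long as $A_S$ has full column rank $k$ we get $\query(t)-z=A_S^{+}(\eta+\eta')$, hence $\normI{\query(t)-z}\le 2\varepsilon\,\normI{A_S^{-1}}$ when $A_S$ is square, and $\normI{\query(t)-z}\le \ssv{A_S}^{-1}\cdot 2\sqrt{|S|}\,\varepsilon$ in general. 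Everything therefore reduces to choosing $S$ and bounding the conditioning of $A_S$.

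For the \textbf{deterministic} setting, take $S$ to be the $k$ coordinates on which $A^{T}$ restricts to $C$, so $A_S=C$ is square and nonsingular. Row diagonal dominance with slack $\alpha$ together with column diagonal dominance with slack $\beta$ (the latter forcing $\min_j C_{jj}\ge\beta$) gives, via the Neumann series $C^{-1}=\bigl(\sum_{s\ge0}(D^{-1}(D-C))^{s}\bigr)D^{-1}$ with $D=\mathrm{diag}(C)$ and $\normI{D^{-1}(D-C)}\le 1-\alpha/\max_j C_{jj}$, the bound $\normI{C^{-1}}\le \frac{\max_j C_{jj}}{\alpha\beta}$. Combined with the $O(\varepsilon)$ error above this yields a $\bigl(\varepsilon,\frac{4\max_j C_{jj}}{\alpha\beta}\varepsilon\bigr)$-query protocol using $O(k\log(\normI{A}/\varepsilon))$ queries, as claimed.

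For the two \textbf{ex-ante} settings the columns of $A_S$ are $k$ independent copies of $\theta_S$, so $A_S^{\!T}A_S$ is their Gram matrix with $\E[A_S^{\!T}A_S]=(\Tr{\Sigma_S})I_k$ in the Gaussian case and $(\sum_{i\in S}v_i^2)I_k$ in the bounded case. The plan is to show that under the stated ``effective rank / weighted size exceeds $k$'' hypotheses the random Gram matrix stays close enough to this multiple of the identity that, with the stated probability, $\ssv{A_S}\gtrsim\sqrt{\Tr{\Sigma_S}/k}$ (resp.\ $\sqrt{(\sum_{i\in S}v_i^2)/k}$). For Gaussians this comes from concentration of quadratic and bilinear forms: the diagonal entries $\norm{\theta_S^{(\ell)}}^2$ concentrate around $\Tr{\Sigma_S}$ with a Hanson--Wright tail whose exponent is governed precisely by the stable rank $\Tr{\Sigma_S}/\rho(\Sigma_S)$, and the off-diagonal inner products are small by the same token; for the bounded case one uses a concentration inequality for (bilinear functions of) weakly dependent bounded variables that exploits $\norm{\Influ{\theta_S}}<1$. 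Feeding the resulting $\ssv{A_S}$ bound into $\normI{\query(t)-z}\le \ssv{A_S}^{-1}2\sqrt{|S|}\varepsilon$ produces the stated $\delta$'s and query counts $O(|S|\log(\normI{A}/\varepsilon))$.

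The main obstacle is this concentration step for the ex-ante cases: establishing the $\ssv{A_S}$ lower bound while using only that the effective rank of $\Sigma_S$ (resp.\ $\sum_{i\in S}v_i^2$) beats $k$ \emph{linearly} --- a naive per-entry Gershgorin bound on $A_S^{\!T}A_S$ would demand a quadratic-in-$k$ condition, so one needs operator-norm control of the off-diagonal part (matrix concentration, or an $\varepsilon$-net over the unit sphere in $\mathbb{R}^k$ combined with the quadratic-form tail, where the $e^{O(k)}$ net size is exactly what the linear-in-$k$ hypothesis can afford) --- and doing so with tails clean enough that the failure probability collapses to $1-2\exp(\cdots)\ge 1-2e^{-4k}$; the weakly dependent case additionally needs the right concentration tool for quadratic forms of dependent bounded variables. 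By contrast, the deterministic case is routine once the double-diagonal-dominance bound on $\normI{C^{-1}}$ is in hand.
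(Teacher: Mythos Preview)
Your proposal is correct and follows essentially the same route as the paper: bisection with the threshold queries to simulate ``noisy value queries'' on a chosen coordinate set $S$, then least-squares/pseudoinverse recovery $\hat z=A_S^{+}\tilde t_S$, with the error controlled by the conditioning of $A_S$; in the deterministic case you bound $\normI{C^{-1}}$ (the paper cites Varah's bound rather than your Neumann series, but both suffice), and in the ex-ante cases you correctly identify the $\varepsilon$-net over $S^{k-1}$ combined with per-direction concentration of $\norm{A_S x}^2$ (Gaussian quadratic-form tails, resp.\ a Chatterjee-type inequality under $\norm{\Influ{\theta_S}}<1$) as the device that beats the na\"ive Gershgorin/quadratic-in-$k$ barrier. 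The only cosmetic difference is that you pass through $\ssv{A_S}^{-1}\sqrt{|S|}$ directly while the paper routes through $\normI{(B^TB)^{-1}}\normI{B^T}\le \sqrt{k\,|S|}\,\lsv{B}/\ssv{B}^2$ (hence also needs the $\sigma_{\max}$ bound); both land on the stated constants.
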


Before we prove Theorem~\ref{thm:query protocol}, we combine it with Theorem~\ref{thm:main} to derive results for a few concrete settings.
\begin{proposition}\label{prop:1+2 deterministic}
Under the same setting as in Theorem~\ref{thm:main} with the extra assumption that every valuation $v_i$ is constrained-additive, 
we can construct mechanism $M$  using only query access to the given mechanism $\givenMech$ and 
oblivious to the true type distribution $D$, such that for any possible $D$, \yangnote{$M$ is
$\left(\eta,\varepsilon_1\right)$-BIC and IR, where $\eta =O\left(\lipshitz\varepsilon_1+ \normI{A}\lipshitz  m f(\varepsilon_1)+  \normI{A}\lipshitz \sqrt{m f(\varepsilon_1)}\right)$,} and has revenue at least $\rev(\widehat{M},\widehat{D}_z)-O\left(\normI{A}\lipshitz  m^2 f(\varepsilon_1)+  \normI{A}\lipshitz m^{3/2}f(\varepsilon_1)^{1/2}\right)$. Recall that $\varepsilon_1$ satisfies $d_P(D_i,A\circ \latentDistI)\leq \varepsilon_1$ for all $i\in [m]$. We compute the function $f(\cdot)$ and the number of queries for the following three concrete settings (one for each of the three assumptions in Theorem~\ref{thm:query protocol}).
\begin{enumerate}
    \item \textbf{Deterministic Structure: Separability.} If the design matrix $A$ satisfies the \textbf{separability assumption} by Donoho and Stodden~\cite{DonohoS03}, that is, $A^T$ can be expressed as $[I_k H^T]\Pi_N$, where $\Pi_N\in \mathbb{R}^N$ is a permutation matrix, $f(\varepsilon_1)=4\varepsilon_1$ for all $\varepsilon>0$. The number of queries each bidder needs to answer is $O\left(k\cdot\log\left(\frac{\normI{A}}{\varepsilon_1} \right)\right)$ .
    \item \textbf{Multivariate Gaussian Distributions: Well-Conditioned Covariance Matrix.} Let $A$ be generated from a distribution, where each archetype is an independent draw from a $N$-dimensional normal distribution $\normal(0,\Sigma)$. Let $\kappa(\Sigma)$ be the condition number of $\Sigma$.\footnote{$\Sigma$ is well-conditioned if $\kappa(\Sigma)$ is small. When $\Sigma=I_N$, $\kappa(\Sigma)=1$.} For any set $S$ with size $64\kappa(\Sigma)k$, if we query each bidder about items in $S$, with probability at least $1-2\exp(-4k)$, $f(\varepsilon_1)=O\left(\frac{k\sqrt{\kappa(\Sigma)}}{\sqrt{\Tr{\Sigma_S}}}\cdot \varepsilon_1 \right)$, and each bidder needs to answer $O\left(\kappa(\Sigma)k\cdot\log\left(\frac{\normI{A}}{\varepsilon_1} \right)\right)$ queries.
    \item \textbf{Weak Dependence: Sufficient Variance per Item.}  Let $A$ be generated from a distribution, where each archetype is an independent copy of an $N$-dimensional random vector $\theta$. Assuming (i) $\norm{\Influ{\theta}}<1$, (ii) $\theta_i$ lies in $[-c,c]$, and (iii) $\var[\theta_i]\geq a^2$ for each $i\in[N]$, then for any set $S$ with size $\frac{256 c^4 k^2}{a^4\left(1-\norm{INF(\theta)}\right)^2}$, if we query each bidder about items in $S$, with probability at least $1-2\exp(-4k)$, $f(\varepsilon_1)=O\left(\frac{\sqrt{k}}{a}\cdot \varepsilon_1\right)$ and each bidder needs to answer $O\left( \frac{c^4 k^2}{a^4\left(1-\norm{INF(\theta)}\right)^2}\cdot \log\left(\frac{\normI{A}}{\varepsilon_1} \right)\right)$ queries.\footnote{Clearly, we can weaken condition (i),(ii) and (iii). The result still holds if we can find a set $S$, so that for vector $\theta_S$, condition (i), (ii), and (iii) hold, and $|S|$ is at least $\frac{256 c^4 k^2}{a^4\left(1-\norm{INF(\theta_S)}\right)^2}$.}
\end{enumerate}
\end{proposition}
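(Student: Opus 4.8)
The proof is a composition of Theorem~\ref{thm:query protocol} with Theorem~\ref{thm:main}, so the plan is almost entirely bookkeeping of constants. First observe that the extra hypothesis makes Theorem~\ref{thm:main} directly applicable: by Definition~\ref{def:constrained additive}, a constrained-additive $v_i$ whose feasibility system $\mathcal{I}$ has sets of size at most $\lipshitz$ is $\lipshitz$-Lipschitz, and all the remaining hypotheses of Theorem~\ref{thm:main} (query access to a BIC, IR $\givenMech$ w.r.t.\ $\latentDist$ and $\{v^A_i\}$, and $d_P(D_i,A\circ\latentDistI)\le\varepsilon_1$) are inherited from ``the same setting''. For each of the three structural assumptions on $A$, we instantiate the corresponding case of Theorem~\ref{thm:query protocol} with its free accuracy parameter set to $\varepsilon_1$; this produces an $(\varepsilon_1,f(\varepsilon_1))$-query protocol $\query$ that asks only willingness-to-pay questions (hence is oblivious to $D$), where $f(\varepsilon_1)$ is the accuracy bound of that case and --- should $f(\varepsilon_1)<\varepsilon_1$ --- we relax the accuracy to $\varepsilon_1$ at no cost, since an $(\varepsilon_1,\delta)$-protocol is also an $(\varepsilon_1,\delta')$-protocol for every $\delta'\ge\delta$, which is exactly what Theorem~\ref{thm:main} requires ($\varepsilon\ge\varepsilon_1$). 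Feeding $\query$ into Theorem~\ref{thm:main} with its ``$\varepsilon$'' equal to $f(\varepsilon_1)$ yields a mechanism $M$ using only query access to $\givenMech$, oblivious to $D$, interacting with each bidder through $\query$ once, which is $(\eta,\varepsilon_1)$-BIC and IR with $\eta=O(\lipshitz\varepsilon_1+\normI{A}\lipshitz m f(\varepsilon_1)+\normI{A}\lipshitz\sqrt{m f(\varepsilon_1)})$ and has $\rev(M,D)\ge\rev(\givenMech,\latentDist)-O(m\eta)$; since $\normI{A}\ge 1$ and $f(\varepsilon_1)\ge\varepsilon_1$ in the relevant regimes, the term $\lipshitz m\varepsilon_1$ inside $m\eta$ is dominated by $\normI{A}\lipshitz m^2 f(\varepsilon_1)$, giving the stated revenue bound. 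The number of queries per bidder is the query complexity of $\query$.

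It remains to identify, in each of the three settings, a good subset $S$ of items to query, to check the preconditions of the relevant case of Theorem~\ref{thm:query protocol} for that $S$, and to read off $f(\varepsilon_1)$ and the query count. \textbf{(1) Separability.} Take $C=I_k$ in the deterministic-structure case, so $\alpha=\beta=1$ and $\max_j C_{jj}=1$; the protocol has accuracy $4\varepsilon_1$ using $O(k\log(\normI{A}/\varepsilon_1))$ queries, so $f(\varepsilon_1)=4\varepsilon_1$. \textbf{(2) Gaussian.} Fix any $S$ with $|S|=64\kappa(\Sigma)k$. Since $\Sigma_S$ is a principal submatrix of $\Sigma$, Cauchy interlacing gives $\rho(\Sigma_S)=\lambda_{\max}(\Sigma_S)\le\lambda_{\max}(\Sigma)$ and $\lambda_{\min}(\Sigma_S)\ge\lambda_{\min}(\Sigma)$, hence $\Tr{\Sigma_S}/\rho(\Sigma_S)\ge|S|\,\lambda_{\min}(\Sigma)/\lambda_{\max}(\Sigma)=|S|/\kappa(\Sigma)=64k$ (strict if $|S|$ is taken marginally larger). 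The accuracy $\tfrac{64\sqrt{|S|k}}{\sqrt{\Tr{\Sigma_S}}}\varepsilon_1$ then equals $O\!\big(\tfrac{k\sqrt{\kappa(\Sigma)}}{\sqrt{\Tr{\Sigma_S}}}\varepsilon_1\big)$, the query count is $O(\kappa(\Sigma)k\log(\normI{A}/\varepsilon_1))$, and the success probability $1-2\exp(-\Tr{\Sigma_S}/(16\rho(\Sigma_S)))\ge 1-2\exp(-4k)$. \textbf{(3) Weak dependence.} Fix $S$ with $|S|=\tfrac{256c^4k^2}{a^4(1-\norm{\Influ{\theta}})^2}$. Conditions (ii), (iii) pass to $\theta_S$ coordinatewise; using $\norm{\Influ{\theta_S}}\le\norm{\Influ{\theta}}<1$ with $\sum_{i\in S}v_i^2\ge|S|a^2$ one verifies $\sum_{i\in S}v_i^2>\tfrac{16c^2k\sqrt{|S|}}{1-\norm{\Influ{\theta_S}}}$, so the accuracy $\tfrac{64\sqrt{|S|k}}{\sqrt{\sum_{i\in S}v_i^2}}\varepsilon_1\le\tfrac{64\sqrt{k}}{a}\varepsilon_1=O(\tfrac{\sqrt{k}}{a}\varepsilon_1)$, the query count is $O\!\big(\tfrac{c^4k^2}{a^4(1-\norm{\Influ{\theta}})^2}\log(\normI{A}/\varepsilon_1)\big)$, and the analogous computation shows the success probability is at least $1-2\exp(-4k)$.

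The only genuinely non-mechanical point is verifying the spectral/variance preconditions of Theorem~\ref{thm:query protocol} for the restricted vector $\theta_S$ rather than $\theta$: in the Gaussian case this rests on the principal-submatrix eigenvalue monotonicity above, and in the weak-dependence case on controlling $\norm{\Influ{\theta_S}}$ --- which is immediate when the coordinates of $\theta$ are independent ($\norm{\Influ{\theta_S}}=0$) and in general must be taken as a hypothesis about $\theta_S$ itself, as the statement's footnote records. Everything else is substitution of these $f(\varepsilon_1)$ values into the guarantees of Theorem~\ref{thm:main}; in particular no fresh incentive or revenue argument is needed, because Theorem~\ref{thm:main} already absorbs both the Prokhorov error $\varepsilon_1$ between $D_i$ and $A\circ\latentDistI$ and the query inaccuracy $f(\varepsilon_1)$ into its $(\eta,\varepsilon_1)$-BIC and revenue statement.
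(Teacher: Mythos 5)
Your proposal is correct and follows essentially the same route as the paper's own (very terse) proof: instantiate the relevant case of Theorem~\ref{thm:query protocol} to get a $(\varepsilon_1, f(\varepsilon_1))$-query protocol, feed it into Theorem~\ref{thm:main}, and for the Gaussian case use eigenvalue interlacing on the principal submatrix $\Sigma_S$ to verify $\Tr{\Sigma_S}/\rho(\Sigma_S)\ge 64k$. You are in fact more careful than the paper on a few bookkeeping points the paper glosses over --- in particular, that the $\lipshitz m\varepsilon_1$ contribution to the revenue loss from $m\kappa$ is being absorbed into the stated $O(\normI{A}\lipshitz m^2 f(\varepsilon_1))$ term, and that in the weak-dependence case the bound $\norm{\Influ{\theta_S}}\le\norm{\Influ{\theta}}$ is not automatic (conditioning sets differ) and is really a hypothesis about $\theta_S$, exactly as the statement's footnote allows.
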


\begin{proof}
The results in the first and last setting follows directly from Theorem~\ref{thm:query protocol}. For the second setting, notice that by the eigenvalue interlacing theorem, $\kappa(\Sigma_S)\leq \kappa(\Sigma)$, as $\Sigma_S$ is a principal submatrix of $\Sigma$. Therefore, $\frac{\Tr{\Sigma_S}}{\rho(\Sigma_S)}\geq \frac{|S|}{\kappa(\Sigma_S)}\geq 64 k$. Now, the result follows from Theorem~\ref{thm:query protocol}.
\end{proof}

\begin{prevproof}{Theorem}{thm:query protocol}
Instead of directly studying the query complexity under our query model. We first consider the query complexity under a seemingly stronger query model, where we directly query the bidder about their value of $e_j^T t$, and their answer will be within $e_j^T t\pm \eta$ for some $\eta>0$. We refer to this type of queries as noisy value queries. Since for each item $j$, $|e_j^T Az|\leq \normI{A}$ for all $z\in [0,1]^k$ and we only care about types in $\mathbb{R}^N$ that are close to some $Az$, we can use our queries to perform binary search on $p$ to simulate noisy value queries. In particular, we only need $\log{\normI{A}}+\log{1/\eta}+\log{1/\varepsilon}$ many queries to simulate one noisy value queries. From now on, the plan is to first investigate the query complexity for noisy value queries, then convert the result to query complexity in the original model. 

We first fix the notation. Let $\ell$ be the number of noisy value queries, and $Q\in \mathbb{R}^{\ell\times N}$ be the query matrix, where, each row of $Q$ is a standard unit vector. We use $\qanswer\in \mathbb{R}^\ell$ to denote the bidder's answers to the queries and $\tanswer\in\mathbb{R}^\ell$ to true answers to the queries. Note that $\normI{\qanswer-\tanswer}\leq \eta$. Given $\qanswer$, we solve the following least squares problem: $\min_{z\in\mathbb{R}^k}\norm{QAz-\qanswer}^2$.

The problem has a closed form solution: $\hat{z}=\left(A^TQ^TQA\right)^{-1}A^TQ^T\qanswer$. Let $B:=QA$, and $z(t)\in \mathbb{R}^k$ be a vector that satisfies $\normI{t-Az(t)}\leq \varepsilon$. We are interested in upper bounding $\normI{\hat{z}-z(t)}$. Note that
\begin{align*}
    \hat{z}-z(t) =& (B^TB)^{-1}B^T (\qanswer-Bz(t))\\
    =& (B^TB)^{-1} B^T \left((\qanswer-\tanswer)+(\tanswer-Bz(t))\right)\\
    =& (B^TB)^{-1} B^T (\qanswer-\tanswer) + (B^TB)^{-1} B^T Q(t-Az(t))
\end{align*}

Since the rows of $Q$ are all standard unit vectors, $\normI{Q}=1$. 
\begin{align*}
   \normI{\hat{z}-z(t)}\leq &\normI{(B^TB)^{-1} B^T (\qanswer-\tanswer)} +\normI{(B^TB)^{-1} B^T Q(t-Az(t))}\\
   \leq &\normI{(B^TB)^{-1}}\normI{B^T}\left(\eta+\normI{Q(t-Az(t))}\right)\\
   \leq &\normI{(B^TB)^{-1}}\normI{B^T}(\eta+\varepsilon).
\end{align*} Next, we bound $\normI{(B^TB)^{-1}}\normI{B^T}$ under the different assumptions.

\paragraph{Deterministic Structure:} We choose $\ell=k$ and $Q$ so that $QA=B=C$. Since $C$ is diagonally dominant, $C$ is non-singular, and $(C^T C)^{-1} = C^{-1}(C^{T})^{-1}$.

\begin{lemma}[Adapted from Theorem 1 and Corollary 1 of~\cite{varah1975lower}]\label{lem:diagonally dominant l-infinity}
         If a matrix $U\in \mathbb{R}^{n\times n}$ is diagonally dominant both by rows and by columns, and $\alpha=\min_{i\in[n]} \left(|U_{ii}|-\sum_{j\neq i} |U_{ij}|\right)$ and $\beta = \min_{j\in[n]} \left(|U_{jj}|-\sum_{i\neq j} |U_{ij}|\right)$, then $\normI{U^{-1}}\leq 1/\alpha$ and $\normI{(U^T)^{-1}}\leq 1/\beta$.
\end{lemma}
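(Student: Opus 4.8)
The plan is to prove the bound on $\normI{U^{-1}}$ directly from the variational description of the induced $\ell_\infty$ operator norm, and then obtain the bound on $\normI{(U^T)^{-1}}$ for free by applying the first bound to the transposed matrix. Recall that for the matrix $\infty$-norm (max absolute row sum), $\normI{U^{-1}}=\sup_{x\neq 0}\normI{U^{-1}x}/\normI{x}$, and the substitution $y=U^{-1}x$ turns this into $\bigl(\inf_{y\neq 0}\normI{Uy}/\normI{y}\bigr)^{-1}$. So the whole proof reduces to establishing the single inequality $\normI{Uy}\ge\alpha\normI{y}$ for every $y\in\mathbb{R}^n$; note that, since $\alpha>0$ when $U$ is (strictly) row-diagonally dominant, this inequality already forces $U$ to be injective and hence invertible, so no separate nonsingularity step is needed.

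For that inequality, I would fix $y$, pick a coordinate $i$ with $|y_i|=\normI{y}$, isolate the diagonal term, and bound the rest by the triangle inequality:
$$|(Uy)_i|=\Bigl|U_{ii}y_i+\sum_{j\neq i}U_{ij}y_j\Bigr|\ge|U_{ii}|\,|y_i|-\sum_{j\neq i}|U_{ij}|\,|y_j|\ge\Bigl(|U_{ii}|-\sum_{j\neq i}|U_{ij}|\Bigr)\normI{y}\ge\alpha\normI{y},$$
where the middle step uses $|y_j|\le|y_i|=\normI{y}$ and the last uses the definition of $\alpha$. Then $\normI{Uy}\ge|(Uy)_i|\ge\alpha\normI{y}$, which is exactly what was needed and gives $\normI{U^{-1}}\le1/\alpha$.

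For the second claim I would simply observe that $U^T$ is again diagonally dominant by rows, with row-dominance parameter $\min_j\bigl(|U_{jj}|-\sum_{i\neq j}|U_{ij}|\bigr)=\beta$, because row $j$ of $U^T$ consists of the diagonal entry $U_{jj}$ together with the off-diagonal entries $\{U_{ij}\}_{i\neq j}$. Applying the bound just proved with $U^T$ in place of $U$ yields $\normI{(U^T)^{-1}}\le1/\beta$. Only row diagonal dominance of $U$ enters the first bound and only column diagonal dominance the second, so requiring $U$ to be diagonally dominant both ways in the hypothesis merely guarantees $\alpha,\beta>0$.

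There is no genuine obstacle in this argument; the only things to get right are the direction of the operator-norm estimate (bounding $\normI{Uy}$ from below rather than bounding $\normI{U}$ from above) and the bookkeeping that identifies the row-dominance margin of $U^T$ with $\beta$. This reproduces, specialized to the $\ell_\infty$ norm, Theorem 1 and Corollary 1 of~\cite{varah1975lower}.
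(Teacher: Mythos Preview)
Your proof is correct and is precisely the standard Varah argument that the paper cites without proof: the paper does not give its own proof of this lemma but simply attributes it to Theorem~1 and Corollary~1 of~\cite{varah1975lower}, and your coordinate-maximizing triangle-inequality estimate together with the transpose observation is exactly that argument. There is nothing to add.
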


By Lemma~\ref{lem:diagonally dominant l-infinity}, $\normI{(C^TC)^{-1}}\normI{ C^T} \leq \frac{\normI{C^T}}{\alpha\beta}$. Note that $\normI{C^T}=\max_{j\in[k]} \sum_{i\in[k]} |C_{ij}|\leq 2\max_{j\in[k]} C_{jj}$. The last inequality is because $C$ is diagonally dominant by columns. To sum up, if we choose $Q$ so that $QA=C$, $$\normI{\hat{z}-z(t)}\leq \frac{ (\varepsilon+\eta)\cdot\normI{C^T}}{\alpha\beta}\leq \frac{2 (\varepsilon+\eta)\cdot \max_{j\in[k]} C_{jj} }{\alpha\beta}.$$

\paragraph{Ex-ante Analysis:} 

Since $\normI{(B^TB)^{-1}}\leq \sqrt{k}\norm{(B^TB)^{-1}}$ and $\normI{B^T}\leq \sqrt{\ell}\norm{B}$, $$\normI{\hat{z}-z(t)}\leq  \frac{\sqrt{\ell k}\cdot \sigma_{max}(B)}{\sigma_{min}(B)^2}\cdot(\eta+\varepsilon),$$ where $\sigma_{max}(B)$ (or $\sigma_{min}(B)$) is $B$'s largest (or smallest) singular value. 

\paragraph{Multivariate Gaussian distribution:} When $\theta$ is distributed according to a multivariate Gaussian distribution, we choose $\ell=|S|$ and $Q$ so that each row corresponding to an $e_j$ with $j\in S$. Now, $B$ is a $\ell\times k$ random matrix where each column is an independent copy of $\theta_S$.  We use Lemma~\ref{lem:GGM singular values} to  bound $B$'s largest singular value  $\lsv{B}$ and smallest singular value $\ssv{B}$. The proof of Lemma~\ref{lem:GGM singular values} is postponed to Section~\ref{sec:proof of concentration Gaussian}.

\begin{lemma}\label{lem:GGM singular values}[Concentration of Singular Values under multivariate Gaussian distributions]\\
         Let $U=[X^{(1)},\ldots, X^{(n)}]$ be a $m\times n$ random matrix, where each column of $U$ is an independent copy of a $m$-dimensional random vector $X$ distributed according to a multivariate Gaussian distribution $\normal(0,\Lambda^T D\Lambda)$. In particular, $\Lambda\in \mathbb{R}^{m\times m}$ is an orthonormal matrix, and $D\in \mathbb{R}^{m\times m}$ is a diagonal matrix. We have $\sigma_{max}(U)\leq 2\sqrt{\Tr{D}}  \text{ and }\sigma_{min}(U)\geq \frac{\sqrt{\Tr{D}}}{4},$ with probability at least $1-2\exp\left(-\frac{\Tr{D}}{8\cdot d_{max}}+4n 
         \right)$, where $d_{max}$ is the largest entry in $D$. 
\end{lemma}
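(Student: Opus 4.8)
The plan is to peel off the rotation $\Lambda$, reducing to the concentration of the singular values of a diagonally scaled matrix with i.i.d.\ standard Gaussian entries, and then to run a standard $\varepsilon$-net argument whose only probabilistic ingredient is a tail bound for a weighted sum of independent $\chi^2_1$ random variables.

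First I would rewrite the columns of $U$. Since $\mathrm{Cov}(X)=\Lambda^T D\Lambda$ with $D$ diagonal and positive semidefinite, each column can be written as $X^{(j)}=\Lambda^T D^{1/2}g_j$ with $g_j\sim\normal(0,I_m)$ i.i.d., so $U=\Lambda^T D^{1/2}G$ where $G:=[g_1,\dots,g_n]\in\mathbb{R}^{m\times n}$ has i.i.d.\ $\normal(0,1)$ entries. Because $\Lambda^T$ is orthonormal, $U^TU=G^TD^{1/2}\Lambda\Lambda^TD^{1/2}G=G^TDG=W^TW$ for $W:=D^{1/2}G$, hence $\lsv{U}=\lsv{W}$ and $\ssv{U}=\ssv{W}$; it therefore suffices to prove the two bounds for $W$. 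We may also assume $n\le m$: if $n>m$ then $\Tr{D}\le m\,d_{max}\le 8n\,d_{max}$, so $-\Tr{D}/(8 d_{max})+4n\ge 0$ and the asserted probability bound is $\ge 1$, hence vacuous.

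Next, the pointwise estimate. Fix a unit vector $v\in\mathbb{R}^n$. Since the columns of $G$ are independent and $\norm{v}=1$, $Gv\sim\normal(0,I_m)$, hence $\norm{Wv}^2=\norm{D^{1/2}Gv}^2=\sum_{i=1}^m d_i Z_i^2$ with $Z_1,\dots,Z_m$ i.i.d.\ $\normal(0,1)$; its mean is $\Tr{D}$. A weighted sum of $\chi^2_1$ variables is sub-exponential, and the Laurent--Massart tail bounds (together with $\sum_i d_i^2\le d_{max}\Tr{D}$) give, for every $x>0$,
\begin{align*}
\Pr\!\left[\norm{Wv}^2\ge \Tr{D}+2\sqrt{x\,\Tr{D}\,d_{max}}+2x\,d_{max}\right]&\le e^{-x},\\
\Pr\!\left[\norm{Wv}^2\le \Tr{D}-2\sqrt{x\,\Tr{D}\,d_{max}}\right]&\le e^{-x}.
\end{align*}
Taking $x:=\Tr{D}/(8 d_{max})$, the upper deviation is $(\tfrac1{\sqrt2}+\tfrac14)\Tr{D}<\Tr{D}$ and the lower deviation is $\tfrac1{\sqrt2}\Tr{D}$, so with probability at least $1-2e^{-x}$ we have $\tfrac14\Tr{D}\le(1-\tfrac1{\sqrt2})\Tr{D}\le\norm{Wv}^2\le 2\Tr{D}$.

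Finally, the net step. Let $\mathcal{N}$ be a $\tfrac18$-net of the unit sphere $S^{n-1}$ with $|\mathcal{N}|\le 17^{n}\le e^{4n}$. Union-bounding the two-sided estimate over $\mathcal{N}$, with probability at least $1-2|\mathcal{N}|e^{-x}\ge 1-2\exp\!\big(-\Tr{D}/(8 d_{max})+4n\big)$ every $v\in\mathcal{N}$ satisfies $\tfrac14\Tr{D}\le\norm{Wv}^2\le 2\Tr{D}$. On this event the usual net-to-spectral-norm comparison — approximating an arbitrary unit vector by a net point and controlling the residual by $\tfrac18\lsv{W}$ — gives $\lsv{W}\le(1-\tfrac18)^{-1}\max_{v\in\mathcal{N}}\norm{Wv}\le\tfrac87\sqrt{2\,\Tr{D}}\le 2\sqrt{\Tr{D}}$, and then, since $\ssv{W}=\min_{u\in S^{n-1}}\norm{Wu}$ because $n\le m$, $\ssv{W}\ge\min_{v\in\mathcal{N}}\norm{Wv}-\tfrac18\lsv{W}\ge\tfrac12\sqrt{\Tr{D}}-\tfrac14\sqrt{\Tr{D}}=\tfrac14\sqrt{\Tr{D}}$. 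Transferring back through the isometry $\Lambda^T$ completes the proof.

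The only real obstacle is calibrating the three free constants — the net fineness $\tfrac18$, the deviation parameter $x$, and the spectral slack — so that the pointwise failure probability $e^{-x}$ and the net size $|\mathcal{N}|=e^{O(n)}$ multiply out to exactly $2\exp(-\Tr{D}/(8 d_{max})+4n)$ while still yielding the clean bounds $\lsv{W}\le 2\sqrt{\Tr{D}}$ and $\ssv{W}\ge\tfrac14\sqrt{\Tr{D}}$; as the computation above shows, there is just enough room for the stated constants (and one could equally well replace the Laurent--Massart inequalities by a generic Bernstein bound for sub-exponential sums at the cost of slightly worse absolute constants).
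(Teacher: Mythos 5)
Your proof is correct and follows essentially the same route as the paper: reduce via the orthogonal factor $\Lambda^T$ to a matrix $D^{1/2}G$ with $G$ standard Gaussian, show $\|D^{1/2}Gv\|^2=\sum_i d_iZ_i^2$ concentrates around $\Tr{D}$ for each fixed unit $v$, and conclude via an $\varepsilon$-net union bound. The only substantive difference is cosmetic: the paper derives the chi-square tail bound by hand from the moment generating function (its Lemma~\ref{lem:GGM concentration for every unit-vector} and Proposition~\ref{prop:approx}) and uses a $1/7$-net with the $(3/\varepsilon)^n$ cardinality bound, whereas you invoke the Laurent--Massart tail inequality together with $\sum_i d_i^2\le d_{max}\Tr{D}$ and a $1/8$-net with the $(1+2/\varepsilon)^n$ bound; both calibrations land on the same stated constants. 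You also spell out the $n>m$ vacuity and the $\ssv{W}=\min_{u\in S^{n-1}}\|Wu\|$ point, which the paper leaves implicit.
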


Since $\frac{\Tr{\Sigma_S}}{\rho(\Sigma_S)}>64 k$, by Lemma~\ref{lem:GGM singular values}, $\lsv{B}\leq 2\sqrt{\Tr{\Sigma_S}}$ and $\ssv{B}\geq \sqrt{\Tr{\Sigma_S}}/4$ with probability at least $1-2\exp\left(-\frac{\Tr{\Sigma_S}}{16\cdot \rho(\Sigma_S)}\right)\geq 1-2\exp(-4k)$. Hence, $\normI{\hat{z}-z(t)}\leq \frac{32\sqrt{|S| k}}{\sqrt{\Tr{\Sigma_S}}}\cdot (\eta+\varepsilon)$ with probability at least $1-2\exp\left(-\frac{\Tr{\Sigma_S}}{16\cdot \rho(\Sigma_S)}\right)$.

\paragraph{Weakly Dependent Distributions:} When the coordinates of $\theta_S$ are weakly dependent, i.e., $\norm{\Influ{\theta_S}}<1$, we choose $\ell=|S|$ and $Q$ so that each row corresponding to an $e_j$ with $j\in S$. Now, $B$ is a $\ell\times k$ random matrix where each column is an independent copy of $\theta_S$. We use Lemma~\ref{lem:singular values under Dobrushin} to  bound  $B$'s largest singular value $\lsv{B}$ and smallest singular value $\ssv{B}$. The proof of Lemma~\ref{lem:singular values under Dobrushin} is postponed to Section~\ref{sec:proof of concentration Dobrushin}.

\begin{lemma}\label{lem:singular values under Dobrushin}[Concentration of Singular Values under Weak Dependence]\\
            Let $U=[X^{(1)},\ldots, X^{(n)}]$ be a $m\times n$ random matrix, where each column of $U$ is an independent copy of a $m$-dimensional random vector $X$. We assume that the coordinates of $X$ are weakly dependent, i.e., $\norm{\Influ{X}}<1$, and each coordinate of $X$ lies in $[-c,c]$ and has mean $0$ and variance $v_i^2$. Let $v=\sqrt{\sum_{i\in[m]} v_i^2}$.  
            We have $\sigma_{max}(U)\leq 2v  \text{ and }\sigma_{min}(U)\geq \frac{v}{4},$ with probability at least $1-2\exp\left(-\frac{\left(1-\norm{\Influ{X}}\right)v^4}{32c^4 nm}+4 n
         \right)$.
\end{lemma}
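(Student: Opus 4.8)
The plan is to follow the architecture of the proof of the Gaussian version, Lemma~\ref{lem:GGM singular values}: reduce the control of $\lsv{U}$ and $\ssv{U}$ to controlling $\norm{Ua}$ for $a$ ranging over a fixed $\tfrac14$-net $\mathcal N$ of the unit sphere $S^{n-1}$, prove a per-direction concentration bound, and take a union bound. By the standard covering argument, if $\norm{Ua}\le \tfrac32 v$ for every $a\in\mathcal N$ then $\lsv{U}\le \tfrac43\max_{a\in\mathcal N}\norm{Ua}\le 2v$, and if in addition $\norm{Ua}\ge \tfrac34 v$ for every $a\in\mathcal N$ then $\ssv{U}\ge \tfrac34 v-\tfrac14\lsv{U}\ge \tfrac14 v$; since $|\mathcal N|\le 9^n\le e^{4n}$, the $e^{4n}$ in the claimed failure probability exactly absorbs the union bound. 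So it suffices to show that for each \emph{fixed} unit vector $a$, $\norm{Ua}^2\in(\tfrac{9}{16}v^2,\tfrac94 v^2)$ except with probability matching, up to absolute constants, the bound $2\exp\big(-(1-\norm{\Influ{X}})v^4/(32c^4 nm)\big)$.

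Fix a unit $a$ and write $Y:=Ua=\sum_{i=1}^n a_i X^{(i)}$, a sum of $n$ \emph{independent} $m$-dimensional vectors. Since each $X^{(i)}$ has mean $0$ and coordinatewise variances $v_l^2$, and cross-terms $i\neq i'$ vanish by independence and mean zero, $\E[\norm{Y}^2]=\sum_{l=1}^m\big(\sum_i a_i^2\big)v_l^2=\sum_l v_l^2=v^2$. It remains to show that $\norm{Y}^2=\sum_{l=1}^m Y_l^2$ concentrates around $v^2$.

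View $\norm{Ua}^2$ as a function $g$ of the $nm$ scalars $\{X^{(i)}_l\}$. These are independent across the columns $i$ and, within a column, weakly dependent with influence matrix $\Influ{X}$; hence the joint influence matrix is block-diagonal with $n$ identical blocks $\Influ{X}$, and its spectral norm is again $\norm{\Influ{X}}<1$. Moreover $g$ has bounded differences: altering one entry $X^{(i)}_l\in[-c,c]$ changes only $Y_l$, by at most $2c|a_i|$, and since $|Y_l|\le c\sum_i|a_i|\le c\sqrt n$ deterministically, $Y_l^2$ changes by at most $|Y_l-Y_l'|(|Y_l|+|Y_l'|)\le 4c^2\sqrt n\,|a_i|$; thus the squared bounded-difference coefficients sum to at most $16c^4 n\sum_{i,l}a_i^2=16c^4 nm$. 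Plugging this into a concentration inequality for functions of weakly dependent random variables --- which under the spectral-norm condition $\norm{\Influ{X}}<1$ gives sub-Gaussian concentration about the mean with variance proxy $O\big(\sum_j c_j^2/(1-\norm{\Influ{X}})\big)$ --- and taking the deviation level to be a constant fraction of $v^2$ yields the per-direction estimate; a union bound over $\mathcal N$ then completes the proof.

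The main obstacle is obtaining the concentration inequality in exactly the form needed. One has to use a bound driven by the \emph{spectral} norm of the influence matrix (as in the paper's definition of weak dependence) rather than the classical row-sum Dobrushin norm, and with \emph{linear} dependence on $1/(1-\norm{\Influ{X}})$ --- which comes from approximate log-Sobolev / tensorization estimates under Dobrushin-type conditions --- and one must correctly assemble the mixed dependence structure of the $nm$ variables (independent across columns, weakly dependent within each) to certify that the governing constant is precisely $\norm{\Influ{X}}$. The only other delicate point is that the bounded-difference coefficients unavoidably inherit a factor $\sqrt n$ because $|Y_l|$ can be as large as $c\sqrt n$ in the worst case; this is exactly why the exponent scales with $nm$ rather than $m$, and tracking these coefficients together with the net parameters tightly is what produces the stated constants.
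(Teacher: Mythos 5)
Your proposal is essentially identical to the paper's proof: an $\varepsilon$-net reduction (the paper's Lemma~\ref{lem:eps-net}), per-direction computation $\E[\norm{Ua}^2]=v^2$ with concentration supplied by Chatterjee's bounded-differences inequality under the spectral Dobrushin condition (Lemma~\ref{lem:concentration weak dependence}), the identical bounded-difference coefficients $4c^2\sqrt{n}\,|a_j|$ with squared sum $16c^4nm$, the same block-diagonal observation $\Influ{U}=I_n\otimes\Influ{X}$ giving $\norm{\Influ{U}}=\norm{\Influ{X}}$, and a union bound over the net. The one bookkeeping slip is your choice of constants: with a $\tfrac14$-net and window $\left(\tfrac34 v,\tfrac32 v\right)$ the usable per-direction deviation is only $t=\tfrac{7}{16}v^2$, yielding exponent $\tfrac{49}{4096}\cdot\frac{(1-\norm{\Influ{X}})v^4}{c^4nm}$, which falls short of the claimed $\tfrac{1}{32}$; the paper instead takes a $\tfrac17$-net with the wider window $\left[\tfrac v2,\sqrt2\,v\right]$, i.e.\ $t=\tfrac34 v^2$, giving $\tfrac{9}{256}>\tfrac{1}{32}$ against a net of size $21^n<e^{4n}$, which makes the stated constants go through.
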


Since $\sum_{i\in S}v_i^2>\frac{16 c^2 k\sqrt{|S|}}{1-\norm{\Influ{\theta_S}}}$, by Lemma~\ref{lem:singular values under Dobrushin}, we have $\lsv{B}\leq 2\sqrt{\sum_{i\in S}v_i^2}$ and $\ssv{B}\geq \sqrt{\sum_{i\in S}v_i^2}/4$ with probability at least $1-2\exp\left(-\frac{\left(1-\norm{\Influ{\theta_S}}\right)\cdot (\sum_{i\in S}v_i^2)^2}{64c^4 k|S|}\right)\geq 1-2\exp(-4k)$. Therefore, 
$\normI{\hat{z}-z(t)}\leq \frac{32\sqrt{|S|k}}{\sqrt{\sum_{i\in S}v_i^2}}\cdot (\eta+\varepsilon)$ with probability at least $1-2\exp\left(-\frac{\left(1-\norm{\Influ{\theta_S}}\right)\cdot (\sum_{i\in S}v_i^2)^2}{64c^4 k|S|}\right)$.

\paragraph{Query Complexity in Different Models:} We set $\eta$ to be $\varepsilon$. 
\begin{itemize}
    \item \textbf{Deterministic structure:} we have a $\left(\varepsilon,\frac{4 \cdot \max_{j\in[k]} C_{jj} }{\alpha\beta}\cdot \varepsilon\right)$-query protocol using $k(\log{\normI{A}}+2\log (1/\varepsilon))$ queries.
    \item \textbf{Multivariate Gaussian distributions:} with probability at least $1-2\exp\left(-\frac{\Tr{\Sigma_S}}{16\cdot \rho(\Sigma_S)}\right)$ (no less than $1-2\exp(-4k)$ by our choice of $S$), we have a $\left(\varepsilon,\frac{64\sqrt{|S| k}}{\sqrt{\Tr{\Sigma_S}}} \cdot \varepsilon\right)$-query protocol using $|S|(\log{\normI{A}}+2\log (1/\varepsilon))$ queries.
    \item \textbf{Weakly dependent distributions:} with probability at least $1-2\exp\left(-\frac{\left(1-\norm{\Influ{\theta_S}}\right)\cdot (\sum_{i\in S}v_i^2)^2}{64c^4 k|S|}\right)$ (no less than $1-2\exp(-4k)$ by our choice of $S$), we have a $\left(\varepsilon,\frac{64\sqrt{|S| k}}{\sqrt{\sum_{i\in S}v_i^2}} \cdot \varepsilon\right)$-query protocol using $|S|(\log{\normI{A}}+2\log (1/\varepsilon))$ queries.
\end{itemize}
\end{prevproof}


\section{Bounding the Largest and Smallest Singular Values }
We prove both Lemma~\ref{lem:GGM singular values} and~\ref{lem:singular values under Dobrushin} using an $\varepsilon$-net argument. We first state a lemma that says that for any matrix $M$, if we can bound the maximum value of $\norm{Mx}$ over all points $x$ in the $\varepsilon$-net, then we also bound the largest and smallest singular values of $M$.
\begin{lemma}[Adapted from~\cite{rudelson2014recent}]\label{lem:eps-net}
For any $\varepsilon<1$, there exists an $\varepsilon$-net $\mathcal{K}\subseteq S^{n-1}$, i.e.,  $\forall x\in S^{n-1}~\exists y\in \mathcal{K}~ \norm{x-y}<\varepsilon$, such that $|\mathcal{K}|\leq (3/\varepsilon)^n$. For any matrix $M\in \mathbb{R}^{m\times n}$, let $a = max_{x\in \mathcal{K}} \norm{Mx}$ and $b=min_{x\in \mathcal{K}} \norm{Mx}$, then $\lsv{M}\leq \frac{a}{1-\varepsilon}$ and $\ssv{M}\geq b-\frac{\varepsilon}{1-\varepsilon}\cdot a$.
\end{lemma}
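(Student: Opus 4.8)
The plan is a textbook $\varepsilon$-net argument, split into a packing step that produces the net and a sandwiching step that converts control on the net into control on the two extreme singular values. For the first step I would take $\mathcal{K}$ to be a \emph{maximal} $\varepsilon$-separated subset of $S^{n-1}$, i.e.\ a subset, maximal under inclusion, all of whose pairwise distances are at least $\varepsilon$ (such a set exists by a greedy construction / Zorn's lemma). Maximality forces the covering property claimed in the statement: any $x\in S^{n-1}$ must lie within distance $<\varepsilon$ of some point of $\mathcal{K}$, since otherwise $\mathcal{K}\cup\{x\}$ would still be $\varepsilon$-separated, contradicting maximality. To bound $|\mathcal{K}|$ I would compare volumes: the open balls $\{B(y,\varepsilon/2)\}_{y\in\mathcal{K}}$ are pairwise disjoint by $\varepsilon$-separation and are all contained in $B(0,1+\varepsilon/2)$, so $|\mathcal{K}|\,(\varepsilon/2)^n\le(1+\varepsilon/2)^n$, whence $|\mathcal{K}|\le(1+2/\varepsilon)^n\le(3/\varepsilon)^n$, the final inequality being exactly where the hypothesis $\varepsilon<1$ is used.

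For the singular-value bounds I would use compactness of $S^{n-1}$ to pick maximizers/minimizers. Let $x^\star\in S^{n-1}$ attain $\norm{Mx^\star}=\lsv{M}$ and choose $y\in\mathcal{K}$ with $\norm{x^\star-y}<\varepsilon$; then, using $\norm{M(x^\star-y)}\le\lsv{M}\norm{x^\star-y}$,
$$\lsv{M}=\norm{Mx^\star}\le\norm{My}+\norm{M(x^\star-y)}\le a+\varepsilon\,\lsv{M},$$
and rearranging gives $\lsv{M}\le a/(1-\varepsilon)$. Similarly, let $x^\star\in S^{n-1}$ attain $\norm{Mx^\star}=\ssv{M}$ and choose $y\in\mathcal{K}$ with $\norm{x^\star-y}<\varepsilon$; then
$$b\le\norm{My}\le\norm{Mx^\star}+\norm{M(x^\star-y)}\le\ssv{M}+\varepsilon\,\lsv{M}\le\ssv{M}+\frac{\varepsilon}{1-\varepsilon}\,a,$$
where the last inequality substitutes the bound on $\lsv{M}$ just proved, and rearranging yields $\ssv{M}\ge b-\frac{\varepsilon}{1-\varepsilon}\,a$.

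There is no genuine obstacle here; the argument is entirely standard. The only two points that need care are (i) making the packing constant collapse to $(3/\varepsilon)^n$, which is precisely what forces the assumption $\varepsilon<1$, and (ii) ordering the two estimates so that the lower bound on $\ssv{M}$ invokes the already-established upper bound on $\lsv{M}$ rather than $\lsv{M}$ itself; with that ordering both inequalities close immediately.
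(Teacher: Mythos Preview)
Your argument is correct and matches the paper's own proof: the paper establishes the two singular-value bounds by the same triangle-inequality sandwich (pick an extremizer on the sphere, approximate by a net point, use $\norm{M(x^\star-y)}\le\varepsilon\,\lsv{M}$, and feed the $\lsv{M}$ bound into the $\ssv{M}$ estimate). The only difference is that you additionally supply the standard packing/volume-comparison argument for the net's existence and cardinality, which the paper simply imports from~\cite{rudelson2014recent}.
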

\begin{prevproof}{Lemma}{lem:eps-net}
Let $x^*\in S^{n-1}$ be a vector that satisfies $\norm{Mx^*}=\lsv{M}$. Let $x$ be a vector in $\mathcal{K}$ such that $\norm{x-x^*}\leq \varepsilon$. Then $\lsv{M}=\norm{Mx^*}\leq \norm{Mx}+\norm{M(x-x^*)}\leq a+\varepsilon \lsv{M}$, which implies that $\lsv{M}\leq \frac{a}{1-\varepsilon}$. On the other hand, for any $y\in S^{n-1}$, let $y'\in \mathcal{K}$ satisfies $\norm{y-y'}\leq \varepsilon$, then $\norm{My}\geq \norm{My'}-\norm{M(y-y')}\geq b- \varepsilon\cdot \lsv{M}\geq b- \frac{\varepsilon}{1-\varepsilon}\cdot a$. 
\end{prevproof}

\subsection{Multivariate Gaussian Distributions}\label{sec:proof of concentration Gaussian}
In this section, we prove the case where the columns of the random matrix are drawn from a  multivariate Gaussian distribution.  The key is again to prove that for every unit-vector, $\norm{Ux}$ lies between $[c_1\cdot\E[\norm{Ux}], c_2\cdot\E[\norm{Ux}]]$ with high probability for some absolute constant $c_1$ and $c_2$ (Lemma~\ref{lem:GGM concentration for every unit-vector}). Lemma~\ref{lem:GGM singular values} follows from the combination of Lemma~\ref{lem:GGM concentration for every unit-vector},~\ref{lem:eps-net}, and the union bound.

\begin{prevproof}{Lemma}{lem:GGM singular values}
 Let $Y^{(1)},\ldots, Y^{(n)}$ be $n$ i.i.d. samples from the distribution $\normal(0,I_m)$, and $V:=D^{1/2} [Y^{(1)},\ldots, Y^{(s)}]$.

\begin{proposition}\label{prop:change of variable}
$\normal(0,\Sigma)\disteq \Lambda^T\circ \normal(0,D)$ and $U\disteq \Lambda^T V$.
\end{proposition}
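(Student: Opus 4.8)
\textbf{Proof proposal for Proposition~\ref{prop:change of variable}.}

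The plan is to verify the two distributional identities directly from the definitions, using only the standard facts that (a) an orthonormal matrix $\Lambda$ satisfies $\Lambda^T\Lambda=\Lambda\Lambda^T=I_m$, and (b) a multivariate Gaussian is closed under linear transformations, with $L\circ\normal(\mu,\Gamma)=\normal(L\mu,L\Gamma L^T)$. First I would establish $\normal(0,\Sigma)\disteq\Lambda^T\circ\normal(0,D)$: if $w\sim\normal(0,D)$ then $\Lambda^T w$ is Gaussian with mean $0$ and covariance $\Lambda^T D (\Lambda^T)^T=\Lambda^T D\Lambda$, which equals $\Sigma$ by the assumed eigendecomposition $\Sigma=\Lambda^T D\Lambda$. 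Hence $\Lambda^T\circ\normal(0,D)=\normal(0,\Sigma)$, using here the notation $L\circ F$ for the pushforward of $F$ under $x\mapsto Lx$ introduced earlier in the excerpt.

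Next I would handle the claim $U\disteq\Lambda^T V$. Recall $U=[X^{(1)},\dots,X^{(n)}]$ has i.i.d.\ columns $X^{(j)}\sim\normal(0,\Lambda^T D\Lambda)$, while $V=D^{1/2}[Y^{(1)},\dots,Y^{(n)}]$ with $Y^{(j)}\sim\normal(0,I_m)$ i.i.d., so the $j$-th column of $\Lambda^T V$ is $\Lambda^T D^{1/2} Y^{(j)}$. This is Gaussian with mean $0$ and covariance $\Lambda^T D^{1/2}\, I_m\, D^{1/2}\Lambda=\Lambda^T D\Lambda$, matching the law of $X^{(j)}$. Since the $Y^{(j)}$ are independent across $j$, the columns $\Lambda^T D^{1/2} Y^{(j)}$ are independent, so the joint law of all $n$ columns of $\Lambda^T V$ coincides with that of $U$. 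This gives $U\disteq\Lambda^T V$ as matrices (equivalently, as $mn$-dimensional random vectors after vectorization).

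There is no real obstacle here; the only point requiring a sentence of care is that equality in distribution of each column separately, \emph{together with} independence of the columns on both sides, upgrades to equality in distribution of the whole matrix — this is immediate because the joint density factorizes identically on the two sides. The purpose of the proposition in context is purely to reduce the singular-value concentration analysis of $U$ to that of $V$: since $\Lambda^T$ is orthonormal, $\norm{\Lambda^T V x}=\norm{Vx}$ for every $x$, so $\lsv{U}$ and $\ssv{U}$ have the same joint law as $\lsv{V}$ and $\ssv{V}$, and the subsequent argument can work with the more tractable $V=D^{1/2}[Y^{(1)},\dots,Y^{(n)}]$ whose columns have independent (scaled standard Gaussian) coordinates.
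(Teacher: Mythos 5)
Your proof is correct and follows essentially the same route as the paper's: both check that $\Lambda^T D^{1/2}Y^{(j)}$ (equivalently $\Lambda^T w$ for $w\sim\normal(0,D)$) is a zero-mean Gaussian whose covariance $\Lambda^T D\Lambda$ matches $\Sigma$, invoking closure of Gaussians under linear maps. Your write-up is somewhat more explicit than the paper's one-line covariance computation, in particular in spelling out that independence of the columns lets the per-column identity in law be upgraded to equality in law of the full matrices.
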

\begin{proof}
 $\E[\Lambda^T D^{1/2} Y^{(i)} (Y^{(i)})^T D^{1/2}\Lambda] = \Lambda^T D^{1/2}\E[Y^{(i)} (Y^{(i)})^T ]D^{1/2} \Lambda= \Lambda^T D\Lambda= \Sigma$.
\end{proof}

Since $\Lambda$ is an orthonormal matrix, $\lsv{U}=\lsv{V}$ and $\ssv{U}=\ssv{V}$. We will proceed to show that both $\lsv{V}$ and $\lsv{V}$ concentrate around their means. We do so via an $\varepsilon$-net argument.

\begin{lemma}\label{lem:GGM concentration for every unit-vector}
         For any fix $x\in S^{n-1}$, $\E[\norm{Vx}^2]=\Tr{D}$. Moreover, 
         $$\Pr\left[\norm{Vx}^2 \leq \frac{\Tr{D}}{4} \right]\leq \exp\left(-\frac{\Tr{D}}{8\cdot d_{max}}\right),$$ and 
                $$\Pr\left[\norm{Vx}^2\geq 2 \Tr{D} \right]\leq \exp\left(-\frac{\Tr{D}}{4\cdot d_{max}}\right).$$
\end{lemma}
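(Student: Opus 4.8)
The plan is to reduce $\norm{Vx}^2$ to a weighted sum of independent $\chi^2_1$ random variables and then apply a standard sub-exponential (Chernoff / Laurent--Massart-type) concentration argument.

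\textbf{Step 1: reduce to a single Gaussian.} Fix $x=(x_1,\dots,x_n)\in S^{n-1}$. By the definition of $V$ in Proposition~\ref{prop:change of variable}'s setup, $Vx=D^{1/2}\sum_{i=1}^{n}x_i Y^{(i)}$. Since the $Y^{(i)}$ are i.i.d.\ $\normal(0,I_m)$ and $\sum_i x_i^2=1$, the vector $\sum_{i=1}^n x_i Y^{(i)}$ is distributed as $W\sim\normal(0,I_m)$, so $Vx\disteq D^{1/2}W$. Writing $d_1,\dots,d_m$ for the diagonal entries of $D$, this gives
\[
\norm{Vx}^2 \;\disteq\; \sum_{j=1}^{m} d_j W_j^2 ,\qquad W_1,\dots,W_m \ \text{i.i.d.}\ \normal(0,1),
\]
and hence $\E[\norm{Vx}^2]=\sum_{j=1}^m d_j=\Tr{D}$, which is the first claim.

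\textbf{Step 2: moment generating function and Chernoff bound.} Let $Z:=\sum_{j=1}^m d_j W_j^2$ and $d_{\max}=\max_j d_j$. Since $Z$ is a nonnegative weighted sum of independent $\chi^2_1$'s it is sub-exponential, with $\E[e^{\lambda Z}]=\prod_{j}(1-2\lambda d_j)^{-1/2}$ for $0<\lambda<\tfrac{1}{2d_{\max}}$ and $\E[e^{-\lambda Z}]=\prod_{j}(1+2\lambda d_j)^{-1/2}$ for all $\lambda>0$. Then $\Pr[Z\ge t]\le e^{-\lambda t}\E[e^{\lambda Z}]$ and $\Pr[Z\le t]\le e^{\lambda t}\E[e^{-\lambda Z}]$.

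\textbf{Step 3: elementary estimates and optimization of $\lambda$.} Taking logarithms and using $\ln(1+u)\ge u-u^2/2$ for $u\ge0$ for the lower tail, and $-\ln(1-u)\le u+u^2$ for $0\le u\le\tfrac12$ for the upper tail, together with $\sum_j d_j^2\le d_{\max}\sum_j d_j=d_{\max}\Tr{D}$, reduces each tail bound to minimizing a quadratic of the form $-c_1\lambda\Tr{D}+c_2\lambda^2 d_{\max}\Tr{D}$ over $\lambda$. Choosing $\lambda=\Theta(1/d_{\max})$ (staying below the validity threshold $\tfrac{1}{2d_{\max}}$ for the upper tail), with $t=\Tr{D}/4$ for the lower tail and $t=2\Tr{D}$ for the upper tail, yields tail bounds of the stated form $\exp(-c\,\Tr{D}/d_{\max})$; the precise exponents in the statement follow by a careful choice of $\lambda$. (Equivalently, one may invoke the Laurent--Massart deviation inequality for $\sum_j d_j(W_j^2-1)$ and bound $\|d\|_2\le\sqrt{d_{\max}\Tr{D}}$ to get the same dependence.)

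\textbf{Main obstacle.} There is no conceptual difficulty; the lemma is a routine sub-exponential concentration computation. The only points needing care are (i) keeping $\lambda$ inside the interval $(0,\tfrac{1}{2d_{\max}})$ on which the $\chi^2$ moment generating function is finite, so that the relevant optimum is a constrained one, and (ii) pinning down the numerical constants in the exponents, which is purely a matter of the elementary inequalities and the choice of $\lambda$ in Step 3.
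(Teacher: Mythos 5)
Your approach is essentially the paper's: reduce $\norm{Vx}^2$ to a weighted sum $\sum_j d_j W_j^2$ of independent $\chi^2_1$'s, then apply a Chernoff bound with the chi-square MGF and $\lambda=\Theta(1/d_{\max})$. Steps 1 and 2 match the paper exactly.

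One caveat about Step 3: the specific elementary inequality you invoke for the upper tail does not quite deliver the stated constant. Using $-\ln(1-u)\le u+u^2$ with $u=2\lambda d_j$ gives the per-coordinate estimate $\E[e^{\lambda d_j(W_j^2-1)}]\le \exp(2\lambda^2 d_j^2)$, hence a log-bound of $-\lambda\Tr{D}+2\lambda^2 d_{\max}\Tr{D}$ at $t=2\Tr{D}$; the constrained optimum at $\lambda=1/(4d_{\max})$ gives $\exp\left(-\Tr{D}/(8d_{\max})\right)$, not the claimed $\exp\left(-\Tr{D}/(4d_{\max})\right)$. The paper obtains the sharper constant by proving $e^{-x}/\sqrt{1-2x}\le\sqrt{1+2x}\le e^{x}$ for $x\in[0,1/4]$, which bounds the centered MGF by $\exp(\lambda d_j)$ with \emph{no} quadratic-in-$\lambda$ term, so the Chernoff exponent becomes $-\lambda(t-\Tr{D})=-\Tr{D}/(4d_{\max})$. (For the lower tail, your $\ln(1+u)\ge u-u^2/2$ actually gives the slightly stronger $\exp(-9\Tr{D}/(64d_{\max}))$, so that side is fine.) This is a cosmetic rather than a conceptual issue — a sub-Gaussian-in-$\Tr{D}/d_{\max}$ tail of either constant suffices for Lemma~\ref{lem:GGM singular values} after minor renormalization — but the claim that ``the precise exponents in the statement follow'' from the inequalities as written should either be corrected or replaced by the paper's sharper per-coordinate estimate.
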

\begin{prevproof}{Lemma}{lem:GGM concentration for every unit-vector}
Let $g_1,\ldots, g_n$ to be $n$ i.i.d. samples from $\normal(0,1)$. It is not hard to see that $Vx\disteq (\sqrt{d_1}g_1,\ldots,\sqrt{d_n}g_n)^T$, so we need to prove that $\sum_{i\in[n]} d_i g_i^2$ concentrates around its mean $\Tr{D}$.

\begin{align*}
    &\Pr\left[\sum_{i\in[n]} d_i g_i^2\leq \Tr{D}-t\right]\\
    =& \Pr\left[\exp\left(\lambda\cdot (\Tr{D}-\sum_{i\in[n]} d_i g_i^2)\right)\geq \exp(\lambda t)\right]~\quad(\text{$\lambda>0$ and will be specified later})\\
    \leq & \frac{\exp(\lambda \Tr{D}
    )\E\left[ \exp\left(-\lambda\cdot \sum_{i\in[n]} d_i g_i^2\right)\right]}{\exp(\lambda t)}= \frac{\exp(\lambda \Tr{D}
    )\prod_{i\in[n]}\E\left[ \exp\left(-\lambda\cdot d_i g_i^2\right)\right]}{\exp(\lambda t)}
\end{align*}

Since $g_i^2$ distributes according to a chi-square distribution, its moment generating function $$\E\left[ \exp\left(-\lambda\cdot d_i g_i^2\right)\right]=\frac{1}{\sqrt{1+2\lambda d_i}}.$$ If we choose $\lambda$ to be no more than $1/2d_{max}$, since for any $a\in[0,1]$, $1+2a\geq e^a$, we have that $$\frac{1}{\sqrt{1+2\lambda d_i}}\leq \exp(-\lambda d_i/2).$$

Putting everything together, we have that $$\Pr\left[\sum_{i\in[n]} d_i g_i^2\leq \Tr{D}-t\right]\leq \exp\left (-\lambda\cdot (t-\Tr{D}/2)\right).$$ When we choose $\lambda=1/2d_{max}$ and $t=3/4\cdot \Tr{D}$, the RHS of the inequality becomes $\exp\left(-\frac{\Tr{D}}{8\cdot d_{max}}\right)$.

Next, we upper bound $\Pr\left[\sum_{i\in[n]} d_i g_i^2\geq \Tr{D}+t\right]$ via a similar approach.
\begin{align*}
    &\Pr\left[\sum_{i\in[n]} d_i g_i^2\geq \Tr{D}+t\right]\\
    =& \Pr\left[\exp\left(\lambda\cdot (\sum_{i\in[n]} d_i g_i^2-\Tr{D})\right)\geq \exp(\lambda t)\right]~\quad(\text{$\lambda>0$ and will be specified later})\\
   \leq & \frac{\prod_{i\in[n]}\E\left[ \exp\left(\lambda\cdot (d_i g_i^2-d_i)\right)\right]}{\exp(\lambda t)}
\end{align*}
Note that $\E\left[ \exp\left(\lambda\cdot (d_i g_i^2-d_i)\right)\right]=\frac{\exp(-\lambda d_i)}{\sqrt{1-2\lambda d_i}}$.
\begin{proposition}\label{prop:approx}
For any $x\in [0, 1/4]$, $\frac{\exp(-x)}{\sqrt{1-2x}}\leq \sqrt{1+2x}$.
\end{proposition}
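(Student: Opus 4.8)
The plan is to reduce the claimed inequality to an elementary polynomial comparison. Since both sides of $\frac{e^{-x}}{\sqrt{1-2x}}\le\sqrt{1+2x}$ are strictly positive on $[0,1/4]$ (note $1-2x\ge 1/2>0$ there, so both the square root and the quotient are well defined), the inequality is equivalent, after multiplying through by $\sqrt{1-2x}$ and squaring, to $e^{-2x}\le (1-2x)(1+2x)=1-4x^2$. Substituting $u=2x$, it therefore suffices to prove $e^{-u}\le 1-u^2$ for all $u\in[0,1/2]$.

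To establish this, I would first record the standard quadratic upper bound $e^{-u}\le 1-u+\frac{u^2}{2}$, valid for all $u\ge 0$. One clean way is to set $f(u)=1-u+\frac{u^2}{2}-e^{-u}$ and note that $f(0)=0$, $f'(0)=0$, and $f''(u)=1-e^{-u}\ge 0$ for $u\ge 0$; hence $f'$ is nondecreasing and thus nonnegative, so $f$ is nondecreasing and thus nonnegative on $[0,\infty)$. (Alternatively one can invoke the alternating-series estimate for $e^{-u}$, or mimic the $1+2a\ge e^{a}$ trick used elsewhere in the paper.)

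It then remains to check $1-u+\frac{u^2}{2}\le 1-u^2$ on $[0,1/2]$, i.e. $\frac{3u^2}{2}\le u$, i.e. $u\le\frac{2}{3}$, which certainly holds for $u\in[0,1/2]$. Chaining the two bounds gives $e^{-u}\le 1-u+\frac{u^2}{2}\le 1-u^2$ on $[0,1/2]$; undoing the substitution $u=2x$ and the (reversible) squaring yields the claim. There is no genuine obstacle here—the statement is a one-variable calculus inequality—and the only point requiring a moment's care is confirming $1-2x>0$ on the stated range so that squaring is reversible and the radicals are real, which is immediate from $x\le 1/4$.
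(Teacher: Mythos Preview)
Your proof is correct. Both you and the paper reduce the claim to $e^{-x}\le\sqrt{1-4x^2}$ on $[0,1/4]$ and then upper-bound the exponential by a quadratic Taylor polynomial; the only real difference is tactical. You square first (equivalently substitute $u=2x$), turning the target into the polynomial $1-u^2$, so a single comparison $1-u+\tfrac{u^2}{2}\le 1-u^2$ finishes the job. The paper instead keeps the radical and inserts an extra intermediate step, lower-bounding $\sqrt{1-4x^2}$ by $1-2x^2-8x^4$ before comparing with the quadratic $1-x+x^2$; this costs one more auxiliary inequality. Your route is a bit more streamlined, but the underlying idea---Taylor bound on $e^{-x}$ followed by an elementary polynomial check---is the same.
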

\begin{prevproof}{Proposition}{prop:approx}
We first state a few inequalities that are not hard to verify. First, for all $x>0$, $e^{-x}\leq 1-x+x^2$. Second, $\sqrt{1-4x^2}\geq 1-2x^2-8x^4$ if $x\in [0,1/2)$. Finally, $1-2x^2-8x^4\geq 1-x+x^2$ if $x\in[0,1/4]$. Combining all three inequalities, we have that $$e^{-x}\leq \sqrt{1-4x^2}=\sqrt{1-2x}\sqrt{1+2x},~\text{for all $x\in[0,1/4]$}.$$
\end{prevproof}

If we choose $\lambda$ to be no more than $1/4d_{max}$, then by Proposition~\ref{prop:approx}, $\frac{\exp(-\lambda d_i)}{\sqrt{1-2\lambda d_i}}\leq \sqrt{1+2\lambda d_i}$, which is upper bounded by $\exp(\lambda d_i)$. Putting everything together, we have that 
$$\Pr\left[\sum_{i\in[n]} d_i g_i^2\geq \Tr{D}+t\right]\leq \exp\left(-\lambda(t-\Tr{D})\right).$$
When we choose $\lambda=1/4d_{max}$ and $t=2 \Tr{D}$, the RHS of the inequality becomes $\exp\left(-\frac{\Tr{D}}{4\cdot d_{max}}\right)$.
\end{prevproof}

Next, we only consider when the good event happens, that is,  for all points $x$ in the $\varepsilon$-net, $\norm{Vx}\in \left[\frac{\sqrt{\Tr{D}}}{2},\sqrt{2\Tr{D}}\right]$. Combining Lemma~\ref{lem:GGM concentration for every unit-vector} and the union bound, we know that the good event happens with probability at least $1-2\exp\left(-\frac{\Tr{D}}{8\cdot d_{max}}+\ln(3/\varepsilon)\cdot n\right)$. According to Lemma~\ref{lem:eps-net}, $\lsv{V}\leq \frac{\sqrt{2\Tr{D}}}{1-\varepsilon}$ and $\ssv{V}\geq \frac{\sqrt{\Tr{D}}}{2}-\frac{\varepsilon}{1-\varepsilon}\cdot \sqrt{2\Tr{D}}$. If we choose $\varepsilon=1/7$, then $\lsv{V}\leq 2\sqrt{\Tr{D}}$ and $\ssv{V}\geq \frac{\sqrt{\Tr{D}}}{4}$.
\end{prevproof}

\subsection{Bounded Distributions with Weak Dependence}\label{sec:proof of concentration Dobrushin}
In this section, we prove the case where the columns of the random matrix are drawn from a $m$-dimensional distribution that satisfies weak dependence. The overall plan is similar to the one for multivariate Gaussian distributions. The key is again to prove that for every unit-vector, $\norm{Ux}$ lies between $\left[c_1\cdot\E[\norm{Ux}], c_2\cdot\E[\norm{Ux}]\right]$ with high probability for some absolute constant $c_1$ and $c_2$ (Lemma~\ref{lem:Dobrushin concentration for every unit-vector}). Lemma~\ref{lem:singular values under Dobrushin} then follows from the combination of Lemma~\ref{lem:Dobrushin concentration for every unit-vector},~\ref{lem:eps-net}, and the union bound.

\begin{prevproof}{Lemma}{lem:singular values under Dobrushin}

We first show that for each fix $x\in S^{n-1}$, $\norm{Ux}$ is concentrates around its mean. Then, we apply Lemma~\ref{lem:eps-net} to bound $\lsv{U}$ and $\ssv{U}$.
\begin{lemma}\label{lem:Dobrushin concentration for every unit-vector}
        Let $U=[X^{(1)},\ldots, X^{(n)}]$ be a $m\times n$ random matrix, where each column of $U$ is an independent copy of a $m$-dimensional random vector $X$. We assume that the coordinates of $X$ are weakly dependent, i.e., $\norm{\Influ{X}}<1$, and each coordinate of $X$ lies in $[-c,c]$ and has mean $0$ and variance $v_i^2$. Let $v=\sqrt{\sum_{i\in[m]} v_i^2}$. For any fix $x\in S^{n-1}$, $\E[\norm{Ux}^2]= v^2$ and 
        $$\Pr\left[|\norm{Ux}^2-v^2|>t\right]\leq 2 \exp\left(-\frac{\left(1-\norm{\Influ{X}}\right)t^2}{16c^4 nm}\right)$$
\end{lemma}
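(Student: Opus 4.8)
The plan is to analyze the random variable $\norm{Ux}^2 = \sum_{j\in[m]}\left(\sum_{i\in[n]} x_i X^{(i)}_j\right)^2$ for a fixed unit vector $x\in S^{n-1}$. Writing $X^{(1)},\ldots,X^{(n)}$ for the $n$ independent copies of $X$, the coordinate-$j$ entry of $Ux$ is $(Ux)_j = \sum_{i\in[n]} x_i X^{(i)}_j$, and since each $X^{(i)}$ has mean-zero coordinates with $\var[X^{(i)}_j]=v_j^2$ and the $X^{(i)}$ are independent, $\E[(Ux)_j^2] = \sum_{i\in[n]} x_i^2 v_j^2 = v_j^2$ (using $\norm{x}=1$). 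Summing over $j$ gives $\E[\norm{Ux}^2] = \sum_{j\in[m]} v_j^2 = v^2$, which is the first claim.

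For the concentration bound, I would view $\norm{Ux}^2$ as a function $f$ of the $n$ independent random vectors $X^{(1)},\ldots,X^{(n)}$, each an $m$-dimensional vector whose coordinates are weakly dependent (Dobrushin-type condition $\norm{\Influ{X}}<1$). The natural tool is a concentration inequality for functions of weakly dependent random variables (a McDiarmid/Azuma-type bound under the Dobrushin uniqueness condition, as in the exponential concentration results that hold when $\norm{\Influ{X}}<1$). The key quantities to control are the bounded-differences coefficients: how much can $f$ change if we resample a single coordinate $X^{(i)}_j$? Since $|X^{(i)}_j|\le c$ for all $i,j$ and the entries of $x$ satisfy $\sum_i x_i^2 = 1$ (so $|x_i|\le 1$), changing $X^{(i)}_j$ by at most $2c$ changes $(Ux)_j$ by at most $2c|x_i|\le 2c$, and since $|(Ux)_j|$ is bounded (all summands are at most $c$ in absolute value and $\sum|x_i|\le\sqrt{n}$, so $|(Ux)_j|\le c\sqrt{n}$), it changes $(Ux)_j^2$ by at most roughly $2c|x_i|\cdot 2c\sqrt{n} = O(c^2\sqrt{n}\,|x_i|)$. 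The total squared sensitivity, summed over the $nm$ coordinate-pairs $(i,j)$, is then $\sum_{i,j} O(c^4 n\, x_i^2) = O(c^4 n m)$ — this is exactly the $c^4nm$ appearing in the denominator of the claimed bound. Feeding this into the weak-dependence concentration inequality, with the factor $\left(1-\norm{\Influ{X}}\right)$ entering from the Dobrushin condition, yields $\Pr[|\norm{Ux}^2 - v^2| > t] \le 2\exp\left(-\frac{(1-\norm{\Influ{X}})t^2}{16c^4 nm}\right)$.

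The main obstacle is making precise which concentration inequality for weakly dependent variables is being invoked and verifying its hypotheses carefully: one must confirm that resampling a single coordinate $X^{(i)}_j$ (holding the others fixed) is the right notion, that the influence matrix of the full collection $(X^{(i)}_j)_{i\in[n],j\in[m]}$ has operator norm still bounded by $\norm{\Influ{X}}<1$ (which holds because the $n$ copies are mutually independent, so the influence matrix is block-diagonal with $n$ identical blocks $\Influ{X}$, hence the same spectral norm), and that the bounded-differences/variance-proxy bookkeeping gives precisely the constant $16$ and the product $c^4 nm$. A secondary subtlety is that the naive worst-case bound $|(Ux)_j|\le c\sqrt n$ might be too lossy, so one may instead want a self-bounding or Lipschitz-in-$\ell_2$ argument applied to $Ux\mapsto\norm{Ux}^2$ restricted to the effective range of $Ux$; but since the final statement only needs the stated (fairly crude) exponent, the bounded-differences route should suffice. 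Once the per-vector concentration of Lemma~\ref{lem:Dobrushin concentration for every unit-vector} is established, Lemma~\ref{lem:singular values under Dobrushin} follows by the stated $\varepsilon$-net argument (Lemma~\ref{lem:eps-net}) and a union bound over the net of size $(3/\varepsilon)^n$, choosing $\varepsilon=1/7$ and $t = v^2$ times appropriate constants, exactly as in the multivariate Gaussian case.
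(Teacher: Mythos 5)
Your proposal is correct and follows essentially the same route as the paper: both compute $\E[\norm{Ux}^2]=v^2$, invoke a bounded-differences concentration inequality under the Dobrushin condition (the paper cites Chatterjee's Theorem 4.3, which is exactly the McDiarmid-under-weak-dependence bound you describe), derive the per-entry Lipschitz coefficient $O(c^2\sqrt{n}\,|x_i|)$ via the factorization $a^2-b^2=(a-b)(a+b)$ with $|(Ux)_j|\le c\sqrt n$, sum squares to get $16c^4nm$, and note that independence of the $n$ columns makes $\Influ{U}=I_n\otimes\Influ{X}$ block-diagonal with the same operator norm. The only thing the paper adds that you flagged as a gap is the explicit reference for the concentration inequality.
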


\begin{prevproof}{Lemma}{lem:Dobrushin concentration for every unit-vector}
We first expand $\norm{Ux}^2$. \begin{align*}
    \norm{Ux}^2
    =\sum_{i\in[m]}\left(\sum_{j\in[n]} u_{ij}x_j\right)^2
    =\sum_{i\in[m]}\left(\sum_{j\in[n]} u_{ij}^2x_j^2+2\sum_{k\neq j} u_{ij}u_{ik}x_jx_k\right).
\end{align*}
Therefore, $\E\left[\norm{Ux}^2\right]=\sum_{i\in[m]} v_i^2=v^2$. To prove that $\norm{Ux}^2$ concentrates, we first need a result by Chatterjee~\cite{Chatterjee2005a}.

\begin{lemma}[Adapted from Theorem 4.3 in~\cite{Chatterjee2005a}]\label{lem:concentration weak dependence}
Let $X$ be a $d$-dimensional random vector. Suppose function $f$ satisfies the following generalized Lipschitz condition: $$|f(x)-f(y)|\leq \sum_{i\in[d]} c_i \mathds{1}[x_i\neq y_i],$$ for any $x$ and $y$ in the support of $X$. If $\Influ{X}<1$, we have $$\Pr\left[ |f(X)-\E[f(X)]|\geq t\right]\leq 2\exp\left(-\frac{\left(1-\norm{\Influ{X}}\right)t^2}{\sum_{i\in[d]}c_i^2}\right).$$
\end{lemma}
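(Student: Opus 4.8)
The plan is to prove this by the method of exchangeable pairs (Stein's method for concentration), following~\cite{Chatterjee2005a}. Let $\mu$ denote the law of $X$ on its support $\mathcal{X}\subseteq\prod_{i\in[d]}\mathcal{X}_i$, and let $P$ be the transition operator of the random-scan Gibbs sampler for $\mu$: from a state $x$, draw a coordinate $I\in[d]$ uniformly at random and replace $x_I$ by a fresh draw from the conditional law $\mu(\cdot\mid X_{-I}=x_{-I})$. Each single-site update is reversible w.r.t.~$\mu$ and $P$ is their average, so $P$ is reversible w.r.t.~$\mu$; hence if $X\sim\mu$ and $X'\sim P(X,\cdot)$, then $(X,X')$ is an exchangeable pair differing in at most one coordinate (namely $I$). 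I will invoke the general concentration inequality for exchangeable pairs: if $F$ is antisymmetric with $\E[F(X,X')\mid X]=f(X)-\E f(X)$ and $v(x):=\tfrac12\E\!\left[\,|F(X,X')|\cdot|f(X)-f(X')|\,\middle|\,X=x\right]$ satisfies $v(X)\le C$ almost surely, then $\Pr[\,|f(X)-\E f(X)|\ge t\,]\le 2\exp(-t^2/(2C))$. So it suffices to produce such an $F$ with $C=\tfrac{1}{2(1-\norm{\Influ{X}})}\sum_{i\in[d]}c_i^2$.

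Take $F(x,x'):=\sum_{m\ge 0}\big[(P^m f)(x)-(P^m f)(x')\big]$, which is manifestly antisymmetric. Since $P$ preserves $\mu$-means, the conditional expectation telescopes, $\E[F(X,X')\mid X=x]=\sum_{m\ge0}\big[(P^m f)(x)-(P^{m+1}f)(x)\big]=f(x)-\lim_m (P^m f)(x)$, and the limit is the constant $\E f(X)$ as soon as $P^m f$ converges to a constant --- which is exactly what the contraction below gives. (The genuine logical order is thus: establish the contraction first, then deduce convergence of the series and the validity of the telescoping by dominated convergence, using the geometric decay together with the boundedness of $f$ that holds in all applications here.)

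The crux is a coordinate-wise contraction for $P$. For $h:\mathcal{X}\to\mathbb{R}$, let $\delta_i(h):=\sup\{\,|h(x)-h(x')|:x,x'\in\mathcal{X},\ x_{-i}=x'_{-i}\,\}$ and $\delta(h)=(\delta_i(h))_{i\in[d]}$; the hypothesis says $\delta_i(f)\le c_i$, so $\norm{\delta(f)}\le\sqrt{\sum_i c_i^2}$. I claim $\delta(Ph)\preceq M\,\delta(h)$ componentwise with $M=\tfrac{d-1}{d}I+\tfrac1d\Influ{X}^{\top}$. Writing $P=\tfrac1d\sum_j Q_j$ where $Q_j h$ averages $h$ over the conditional law of coordinate $j$: $Q_jh$ does not depend on $x_j$, so $\delta_j(Q_jh)=0$; and for $i\ne j$, perturbing $x_i$ affects $Q_jh$ through the argument $x_{-j}$ of $h$ (contributing at most $\delta_i(h)$) and through the conditional law of coordinate $j$, which moves by at most $\alpha_{j,i}$ in total variation (contributing at most $\alpha_{j,i}\delta_j(h)$), so $\delta_i(Q_jh)\le\delta_i(h)+\alpha_{j,i}\delta_j(h)$. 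Summing over $j$ and dividing by $d$ gives the claim. Since $M$ has nonnegative entries, iterating gives $\delta(P^m f)\preceq M^m\delta(f)$, whence $\norm{\delta(P^m f)}\le\norm{M}^m\norm{\delta(f)}$ with $\norm{M}\le\tfrac{d-1}{d}+\tfrac1d\norm{\Influ{X}^{\top}}=1-\tfrac{1-\norm{\Influ{X}}}{d}$ (the transpose is immaterial for the spectral norm); in particular $P^m f\to\E f(X)$.

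It remains to bound $v$. Because $X$ and $X'$ differ only in the resampled coordinate $I$, $|f(X)-f(X')|\le c_I\mathds{1}[X_I\ne X'_I]\le c_I$ and $|F(X,X')|\le\sum_{m\ge0}\delta_I(P^m f)$. Writing $\gamma:=1-\norm{\Influ{X}}$ and $c:=(c_1,\dots,c_d)$, and using $\langle c,\delta(P^m f)\rangle\le\norm{c}\,\norm{\delta(P^m f)}\le\norm{c}^2(1-\gamma/d)^m$,
\begin{align*}
v(x)\ \le\ \frac12\,\E_I\!\Big[\,c_I\sum_{m\ge0}\delta_I(P^m f)\,\Big]
&=\frac1{2d}\sum_{m\ge0}\langle c,\delta(P^m f)\rangle\\
&\le\ \frac1{2d}\sum_{m\ge0}\norm{c}^2\Big(1-\frac{\gamma}{d}\Big)^m\ =\ \frac{\norm{c}^2}{2\gamma}.
\end{align*}
Hence $C=\tfrac{1}{2\gamma}\sum_i c_i^2$, and the exchangeable-pairs bound yields $\Pr[\,|f(X)-\E f(X)|\ge t\,]\le 2\exp(-t^2/(2C))=2\exp\!\big(-(1-\norm{\Influ{X}})t^2/\sum_i c_i^2\big)$, as claimed. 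The main obstacle is the contraction step --- pinning down $M$ with exactly the influence entries (including that it is $\Influ{X}^{\top}$, which matters for the inductive estimate though not for the norm) and handling a possibly non-product support $\mathcal{X}$, where one must check that the Gibbs sampler stays in $\mathcal{X}$, is reversible there, and that $\delta_i$ and $\alpha_{i,j}$ are the quantities defined relative to $\mathcal{X}$; a secondary point is justifying the convergence and term-by-term handling of the infinite series defining $F$.
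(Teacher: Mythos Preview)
The paper does not supply a proof of this lemma at all: it is stated as ``Adapted from Theorem~4.3 in~\cite{Chatterjee2005a}'' and simply invoked as a black box inside the proof of Lemma~\ref{lem:Dobrushin concentration for every unit-vector}. So there is no in-paper argument to compare against.

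That said, your proposal is a faithful and correct reconstruction of Chatterjee's exchangeable-pairs argument, which is exactly the cited source. The Gibbs-sampler exchangeable pair, the antisymmetric $F=\sum_{m\ge 0}(P^m f - P^m f')$, the coordinate-wise contraction $\delta(Ph)\preceq M\delta(h)$ with $M=\tfrac{d-1}{d}I+\tfrac{1}{d}\Influ{X}^{\top}$, and the resulting bound $v(x)\le \norm{c}^2/(2\gamma)$ are all the standard steps, and your arithmetic checks out (in particular the geometric sum $\sum_m(1-\gamma/d)^m=d/\gamma$ cancels the $1/d$ from averaging over the resampled coordinate). The caveats you flag yourself --- reversibility and well-definedness of the Gibbs sampler on a non-product support, and the dominated-convergence justification for the series defining $F$ --- are the right ones; under the boundedness assumptions used throughout this paper (coordinates in $[-c,c]$) they go through without difficulty.
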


The function we care about is $\norm{Ux}^2$, where the variables are $\{u_{ij}\}_{i\in[m],j\in[n]}$. If $U$ and $U'$ only differs at the $(i,j)$ entry, then 
\begin{align*}
    &|\norm{Ux}^2-\norm{U'x}^2|\\
    =&| u_{ij}^2 x_j^2+2\sum_{k\neq j} u_{ij}u_{ik}x_j x_k-(u'_{ij})^{2} x_j^2-2\sum_{k\neq j} u'_{ij}u_{ik}x_jx_k|\\
    \leq & c^2 x_j^2+4c^2|x_j||x_k|\leq 4c^2 |x_j|\left(\sum_{k\in[n]}|x_k|\right)\leq  4c^2\sqrt{n} |x_j|
\end{align*}
We denote $4c^2\sqrt{n} |x_j|$ by $c_{ij}$. Clearly, for any $U$ and $U'$, $|\norm{Ux}^2-\norm{U'x}^2|\leq \sum_{i,j\in[d]} c_{ij}\mathds{1}[u_{ij}\neq u'_{ij}]$. Also, notice that $\Influ{U}=I_n\otimes \Influ{X}$, and therefore $\norm{\Influ{U}}=\norm{\Influ{X}}$.~\footnote{$\otimes$ denotes the Kronecker product of the two matrices.} We apply Lemma~\ref{lem:concentration weak dependence} to $\norm{Ux}^2$ and derive the following inequality:
$$\Pr\left[|\norm{Ux}^2-v^2|>t\right]\leq 2 \exp\left(-\frac{\left(1-\norm{\Influ{X}}\right)t^2}{\sum_{i\in[m],j\in[n]}c_{ij}^2}\right)=2 \exp\left(-\frac{\left(1-\norm{\Influ{X}}\right)t^2}{16c^4 nm}\right).$$
\end{prevproof}
Next, we only consider when the good event happens, that is,  for all points $x$ in the $\varepsilon$-net, $\norm{Ux}\in \left[\frac{v}{2},\sqrt{2}v\right]$. Combining Lemma~\ref{lem:Dobrushin concentration for every unit-vector} (setting $t=3/4 v^2$) and the union bound, we know that the good event happens with probability at least $1-2\exp\left(-\frac{\left(1-\norm{\Influ{X}}\right)9v^4}{256c^4 nm}+\ln(3/\varepsilon)\cdot n\right)$. According to Lemma~\ref{lem:eps-net}, $\lsv{U}\leq \frac{\sqrt{2}v}{1-\varepsilon}$ and $\ssv{U}\geq \frac{v}{2}-\frac{\varepsilon}{1-\varepsilon}\cdot \sqrt{2}v$. If we choose $\varepsilon=1/7$, then $\lsv{U}\leq 2v$ and $\ssv{U}\geq \frac{v}{4}$.
\end{prevproof}

\bibliographystyle{plain}
\bibliography{Yang.bib}

\appendix
\section{Missing Proof of Lemma~\ref{lem:robustness}}\label{sec:robustness proof}
\begin{prevproof}{Lemma}{lem:robustness}
The proof essentially follows from the same analysis as Theorem 3 in~\cite{brustle2020multi}. We only provide a sketch here. Since we are working with the matrix factorization model and can directly exploit the low dimensionality of the latent representation, we manage to replace the dependence on $N$ with $\normI{A}$ in both the revenue loss and violation of the truthfulness constraints. 
Our proof relies on the idea of \emph{``simultaneously coupling''} by Brustle et al.~\cite{brustle2020multi}. More specifically, it couples $\widehat{F}_{z,i}$ with every distribution $F_{z,i}$ in the $\varepsilon$-Prokhorov-ball around $\widehat{F}_{z,i}$. If we round both $\widehat{F}_{z,i}$ and any $F_{z,i}$ to a random grid $G$ with size $\delta$, we can argue that the \emph{expected total variation distance} (over the randomness of the grid) between the two rounded distributions is $O(\varepsilon+\frac{\varepsilon}{\delta})$ (using Theorem 2 in~\cite{brustle2020multi}). Now consider the following mechanism: choose a random grid $G$, round the bids to the random grid, apply the  mechanism $M_G$ that we designed for the rounded distribution of $\bigtimes_i \widehat{F}_{z,i}$. More specifically, $M_G$ is the following mechanism: for each bid $b$, use $\condsample_i(b_i,\delta)$ to sample a bid $b'_i$ and run $\givenMech$ on the bid profile $(b'_1,\ldots, b'_m)$. 
 Since the expected total variation distance (over the randomness of the grid) between the two rounded distributions is $O(\varepsilon+\frac{\varepsilon}{\delta})$, we only need to argue that when the given distribution and the true distribution are close in total variation distance, we can robustify the mechanism designed for one distribution for the other distribution. This is a much easier task, and we again use a similar argument in~\cite{brustle2020multi} to prove it. Combining everything,  we can show that the randomized mechanism we constructed is  approximately-truthful and only loses a negligible revenue compared to $\givenMech$ under any distribution that is within the $\varepsilon$-Prokhorov-ball around the given distribution.	
\end{prevproof}

\end{document}